\documentclass[a4paper,11pt]{article}

\usepackage{fullpage}
\usepackage{times}
\usepackage{soul}
\usepackage{url}
\usepackage[utf8]{inputenc}
\usepackage[small]{caption}
\usepackage{graphicx}
\usepackage{amsmath}
\usepackage{booktabs}

\usepackage[ruled,vlined]{algorithm2e}
\SetArgSty{textrm}

\urlstyle{same}
\usepackage{enumitem}

\usepackage{libertine}

\usepackage{amssymb,amsmath,amsfonts,amstext,amsthm}

\usepackage[breaklinks=true]{hyperref}
\usepackage[svgnames]{xcolor}
\usepackage[capitalise,nameinlink]{cleveref}
\hypersetup{colorlinks={true},linkcolor={DarkBlue},citecolor=[named]{DarkGreen}}

\usepackage{tikz}  
\usetikzlibrary{arrows}
\usetikzlibrary{patterns,snakes}
\usetikzlibrary{decorations.shapes}
\tikzstyle{overbrace text style}=[font=\tiny, above, pos=.5, yshift=5pt]
\tikzstyle{overbrace style}=[decorate,decoration={brace,raise=5pt,amplitude=3pt}]
\usetikzlibrary{shapes.geometric}
\usetikzlibrary{shapes,positioning}
\usetikzlibrary{calc,positioning}
\usetikzlibrary{shapes.multipart}

\definecolor{cadmiumgreen}{rgb}{0.0, 0.42, 0.24}

\usepackage{bbm}

\newtheorem{theorem}{Theorem}[section]
\newtheorem{corollary}[theorem]{Corollary}

\theoremstyle{definition}

\newtheorem{example}[theorem]{Example}

\usepackage{natbib}

\usepackage{mathtools}
\usepackage{xcolor} 

\usepackage{authblk}

\usepackage{subcaption}

\usepackage{mdframed}
\mdfsetup{linewidth=1pt}

\usepackage{xspace}
\usepackage{fancyhdr}
\usepackage{tcolorbox}


\newcommand{\alex}[1]{{\color{blue}[Alex: #1]}}
\newcommand{\pk}[1]{{\color{red}[pk: #1]}}

\newcommand{\cost}{\text{cost}}
\newcommand{\opt}{\mathbf{o}}
\newcommand{\bw}{\mathbf{w}}

\newcommand{\bx}{\mathbf{x}}
\newcommand{\bc}{\mathbf{c}}
\newcommand{\bp}{\mathbf{p}}

\newcommand{\SC}{\text{\normalfont SC}}
\newcommand{\MC}{\text{\normalfont MC}}

\newcommand{\alphaMech}{\text{\sc $\alpha$-Statistic}}
\newcommand{\PCM}{\text{\sc Stronger-Majority-Median}}

\newcommand{\maxV}{\text{\sc Vote-for-Priority}}

\sloppy 
\allowdisplaybreaks

\title{\bf Truthful Two-Facility Location with Candidate Locations}
\author{Panagiotis Kanellopoulos, Alexandros A. Voudouris, Rongsen Zhang}
\date{School of Computer Science and Electronic Engineering, University of Essex, UK}

\begin{document}

\maketitle

\begin{abstract}
We study a truthful two-facility location problem in which a set of agents have private positions on the line of real numbers and known approval preferences over two different facilities. Given the locations of the two facilities, the cost of an agent is the total distance from the facilities she approves. 
The goal is to decide where to place the facilities from a given finite set of candidate locations so as to 
(a) approximately optimize desired social objectives, and 
(b) incentivize the agents to truthfully report their private positions. 
We focus on the class of deterministic strategyproof mechanisms and show bounds on their approximation ratio in terms of the social cost (i.e., the total cost of the agents) and the max cost for several classes of instances depending on the preferences of the agents over the facilities. 
\end{abstract}

\section{Introduction}
In the well-studied {\em truthful single-facility location problem}, there is a set of agents with private positions on the line of real numbers, and a facility (such as a park or a school) that is to be located somewhere on the line. Given such a location, its distance from the position of an agent is interpreted as the individual cost that the agent would suffer by having to travel to the facility in order to be serviced by it. The goal is to determine the facility location so that the agents are given the right incentives to {\em truthfully} report their positions (that is, not being able to affect the outcome to decrease their cost), and, at the same time, a social function of the individual agent costs (such as the total or the maximum cost) is (approximately) optimized. Since the work of \citet{procaccia09approximate}, who were the first to consider facility location problems through the prism of {\em approximate mechanism design without money}, research on this topic has flourished and a large number of more complex variants of the problem have been introduced and analyzed; see the survey of \citet{fl-survey} for an overview. 

The original work of \citet{procaccia09approximate} focused on a {\em continuous} model, where the facility is allowed to be placed at any point of the line, and showed tight bounds on the approximation ratio of deterministic and randomized {\em strategyproof} mechanisms in terms of the {\em social cost} (total individual cost of the agents) and the {\em max cost} (maximum individual cost among all agents). The {\em discrete} model, where the facility can be placed only at a given set of candidate points of the line, has also been studied, most notably by \citet{feldman2016voting}. They observed that this setting is equivalent to voting on a line, and the strategyproofness constraint leads to deterministic mechanisms that make decisions using only the ordinal preferences of agents over the candidate points; an assumption typically made in the distortion literature~\citep{distortion-survey}. 
Many other truthful single-facility location models have been studied under different assumptions, such as that the location space is more general than a line~\citep{alon2010networks,meir2019circle,goel2020coordinate}, that the facility location must be decided in a distributed way~\citep{FV21,FKVZ23}, or that the facility is obnoxious and the agents aim to be as far from it as possible~\citep{cheng2013obnoxious}. 

Aiming to locate multiple facilities is a natural generalization. Most of the work in this direction has focused on the fundamental case of two facilities, under several different assumptions about the types of the facilities, the preferences of the agents for them, the agent-related information that is public or private, and whether the setting is continuous or discrete, aiming to capture different applications. 
When the facilities are of the same type and serve the same purpose (for example, they might correspond to two parks), with only a few exceptions, the typical assumption is that each agent cares about one of them, such as the facility placed closest to her position (i.e., the agent would like to have a shortest path to a facility) \citep{procaccia09approximate,Lu2010two-facility,fotakis2016concave}, or the facility farthest from her position (i.e., the agent would like to be in a short radius from both facilities)~\citep{chen2020optional}. In most works the facilities are considered desirable, but settings in which both of them are obnoxious have also been studied~\citep{Gai2022obnoxious}. More generally, the facilities might be of different types and serve a different purpose (for example, they might correspond to a park and a school), in which case the agents might have different, {\em heterogeneous} preferences over them. For example, some agents might be interested in both facilities, some agents might be interested in only one of them, or some agents might consider one facility to be useful and the other to be obnoxious~\citep{anastasiadis2018heterogeneous,chen2020optional,deligkas2023limited,feigenbaum2015hybrid,gai2024mixed,serafino2016,kanellopoulos2021discrete,li2020optional,zou2015dual,Zhao2023constrained,lotfi2023max}. 

In this paper, we consider the case of two different facilities that can be placed only at candidate locations. For each agent, we assume that her position on the line is private, she has approval preferences over the facilities which are publicly known, and her individual cost is given by the total distance from the facilities she approves (rather than the distance from the closest or farthest such facility). We provide more details below.  

\subsection{Our contribution}
We study a truthful two-facility location problem, in which there is a set of agents with {\em known approval preferences} ($0$ or $1$) over two different facilities $\{F_1,F_2\}$, so that each agent approves at least one facility; we can safely ignore agents that approve neither facility. 
The agents have {\em private positions} on the line of real numbers, and the facilities can only be placed at {\em different} locations chosen from a given set of {\em candidate} locations. Once the facilities have both been placed, the individual cost of each agent is the {\em total} distance from the facilities she approves. 

\renewcommand{\arraystretch}{1.3}
\begin{table}[t]
    \centering
    \begin{tabular}{c|cc}
                      & Social cost & Max cost \\ 
       \hline
       Doubleton   & $[1+\sqrt{2},3]$ & $[2,3]$ \\
       Singleton   & $3$  & $3$ \\
       General     & $[3,7]$ & $3$ \\
       \hline
    \end{tabular}
    \caption{An overview of the bounds that we show in this paper on the approximation ratio of deterministic strategyproof mechanisms for the different combinations of social objectives functions (social cost and max cost) and agent preferences (doubleton, singleton, or general).}
    \label{tab:overview}
\end{table}

Our goal is to design mechanisms that take as input the positions reported by the agents, and, using also the available information about the preferences of the agents, decide where to place the two facilities, so that 
(a) a social objective function is (approximately) optimized, and 
(b) the agents are incentivized to truthfully report their positions.   
As in previous work, we consider the well-known {\em social cost} (the total individual cost of the agents) and the {\em max cost} (the maximum individual cost over all agents) as our social objective functions. We treat separately the class of instances in which all agents approve both facilities (to which we refer as {\em doubleton}), the class of instances in which all agents approve one facility (to which we refer as {\em singleton}), and the general class of all possible instances. For all possible combinations of objectives and types of preferences, we design {\em deterministic} strategyproof mechanisms with small, constant approximation ratios. An overview of our results is given in Table~\ref{tab:overview}. 

In Section~\ref{sec:sc} we consider the social cost and show the following results:
\begin{itemize}
\item For doubleton instances (in which all agents approve both facilities), we show that the best possible approximation ratio of strategyproof mechanisms is between $1+\sqrt{2}$ and $3$. Our upper bound follows by a mechanism, which places the facilities at the two candidate locations closest to the median agent; this is the natural extension of the {\sc Median} mechanism which achieves the best possible approximation ratio of $3$ for the single-facility location problem~\citep{feldman2016voting}. These results can be found in Section~\ref{sec:sc-doubleton}.

\item For singleton instances (in which each agent approves one facility), we first observe that no strategyproof mechanism can achieve an approximation ratio better than $3$; this follows from the fact that the problem is now a generalization of the single-facility location problem. The main technical difficulty, which does not allow us to simply treat a singleton instance as two separate single-facility location problems (one for each facility), is that the facilities cannot be placed at the same location. We circumvent this difficulty and show a tight upper bound of $3$ by considering a mechanism that places each facility at the available candidate location closest to the median agent among those that approve it. To decide the order in which the facilities are placed, we first perform a voting step that allows the agents that approve each facility to decide if they prefer the closest or second-closest candidate location to the respective median agent; this is necessary since just blindly choosing the order of placing the facilities leads to a mechanism with a rather large approximation ratio. These results are presented in Section~\ref{sec:sc-singleton}.

\item For general instances, we show an upper bound of $7$ by considering a mechanism which switches between two cases depending on the cardinalities of the sets of agents with different preferences. In particular, when there is a large number of agents that approve both facilities, we run the simple median mechanism we used for doubleton instances by ignoring the other agents. Otherwise, we run a mechanism that places the facility that is approved by most agents at the location closest to the median of the agents that approve {\em only} it, while the other facility is placed at the available location that is closest to the median of the agents that approve it. These results are presented in Section~\ref{sec:sc-general}. Our bound of $7$ for general instances significantly improves upon the bound of $22$ that \citet{lotfi2023max} showed via a reduction between the model in which the individual cost of an agent is the distance to the farthest facility among the ones she approves and our model (in which the individual cost of an agent is the distance to all the facilities she approves).
\end{itemize}

In Section~\ref{sec:max}, we turn our attention to the max cost objective and show the following results:
\begin{itemize}
\item For doubleton instances, we show that the best possible approximation ratio is between $2$ and $3$. Our upper bound follows by a simple mechanism that places the facilities at the available candidate locations closest to the leftmost agent; see Section~\ref{sec:max-doubleton}.

\item For singleton instances, we show a tight bound of $3$ by considering a mechanism that places the two facilities at the candidate locations closest to some agents among those that approve. The main difficulty here is to decide which agents to pick. In particular, after placing the first facility at the candidate location closest to one of the agents that approve it (such as the leftmost), we then need to dynamically decide whether the second facility can be placed closer to the leftmost or rightmost among the agents that approve it, or neither of them. This again is done by a voting-like procedure that is used to decide the order of the agents that approve the second facility relative to the two candidate locations that are closest to where the first facility has been placed. 

\item For general instances, we show a tight bound of $3$ by splitting the class of all instances into those that consist of at least one agent that approves both facilities (in which case we employ the mechanism for doubleton instances) and the remaining instances which are singleton (and we employ the corresponding mechanism). 
\end{itemize}

Finally, in Section~\ref{sec:same}, we consider a slightly simpler model in which the two facilities are allowed to be placed at the same candidate location. For this model, we manage to show improved, tight bounds on the approximation ratio of deterministic mechanisms for doubleton and general instances for both the social and the max cost (the problem is not interesting for singleton instances). This is possible because we can now avoid possible misreports by agents with doubleton preferences, which in turn allows us to consider a class of mechanisms that is not strategyproof when the facilities are constrained to be placed at different locations.

\subsection{Related work}
For an overview of the many different truthful facility location problems that have been considered in the literature, we refer the reader to the survey of \citet{fl-survey}. Here, we will briefly discuss and compare the papers that are most related to our work. Most of the papers discussed below differ from ours in at least one modeling dimension in terms of the definition of the individual cost of the agents, the possible constraints on the locations of the facilities, and what type of information related to the positions and preferences of the agents is assumed to be private or public. 

The truthful two-facility location problem was first considered in the original work of \citet{procaccia09approximate}, in which the goal is to locate two identical facilities (even at the same location) and the individual cost of an agent is the distance between her position and the closest facility. 
For deterministic mechanisms, \citeauthor{procaccia09approximate} showed a constant lower bound and a linear upper bound on the approximation ratio in terms of the social cost, and a tight bound of $3$ in terms of the maximum cost. They also showed how randomization can lead to further improvements. 
\citet{Lu2010two-facility} improved the lower bound for the social cost and deterministic mechanisms to an asymptotically linear one, before \citet{fotakis2016concave} finally showed that the exact bound for this case is $n-2$. 

\citet{Sui2015constrained} were among the first to consider truthful facility location problems with candidate locations (referred to as constrained facility location), with a focus on achieving approximate strategyproofness by bounding the incentives of the agents to manipulate; for multiple facilities, they considered only doubleton instances where each agent's individual cost is the distance to the closest facility.  
As already mentioned, \citet{feldman2016voting} considered a candidate selection problem with a fixed set of candidates, a model which translates into a single-facility location problem where the facility can only be placed at a location from a given set of discrete candidate locations. They focused on the social cost objective and, among other results, proved that the {\sc Median} mechanism that places the facility at the location closest to the position reported by the median agent, achieves an upper bound of $3$; they also showed that this is the best possible bound among deterministic mechanisms.

\citet{serafino2016} considered a slightly different discrete facility location problem, where agents occupy nodes on a line graph and have approval preferences over two different facilities that can only be placed at different nodes of the line. In contrast to our work here, where we assume that the positions of the agents are private and their preferences public information, \citeauthor{serafino2016} assumed that the positions are known and the preferences unknown. They showed several bounds on the approximation ratio for deterministic and randomized strategyproof mechanisms for the social cost and the max cost. Some of their results for deterministic mechanisms were improved by \citet{kanellopoulos2021discrete}. The alternative continuous model (with the same assumptions about the positions and preferences of the agents as \citet{serafino2016} ) where the agents have positions on the line of real numbers and the two facilities can be located at any point of the line was considered by \citet{chen2020optional} and in the follow-up work of \citet{li2020optional}. \citeauthor{chen2020optional} showed bounds on the approximation ratio of strategyproof mechanisms for two different individual cost definitions depending on whether the cost of an agent is determined by closest or the farthest approved facility; \citet{li2020optional} showed improved bounds for the former individual cost definition.

\citet{Tang2020candidate} considered a setting in which two identical facilities can be placed at locations chosen from a set of candidate ones, allowing the facilities to be placed even at the same location. The positions of the agents are assumed to be private information and an agent's individual cost is defined as her distance from the closest facility. They proved an upper bound of $2n-3$ for the social cost objective and a tight bound of $3$ for the maximum cost. They also considered the case of a single facility and the max cost objective, for which they showed a bound of $3$ (extending the work of \citet{feldman2016voting} who only focused on the social cost). 
\citet{Walsh2021limited} considered a similar setting, where one or more facilities can only be placed at different subintervals of the line, and showed bounds on the approximation ratio of strategyproof mechanisms for many social objective functions, beyond the classic ones.
\citet{Zhao2023constrained} studied a slightly different setting, in which the agents have known approval preferences over two different facilities and their individual costs are defined as their distance from the farthest facility among the ones they approve. For doubleton instances, they showed a tight bound of $3$ for both the social cost and the max cost objectives, while for general instances they showed an upper bound of $2n+1$ for the social cost and an upper bound of $9$ for the max cost; their results for general instances were recently improved by \citet{lotfi2023max} to $11$ and $5$ for the social cost and the max cost, respectively. As already mentioned \citet{lotfi2023max} also showed a bound of $22$ via a reduction between their individual max cost model and the individual sum cost model that we focus here; we improve this bound to $7$.  

\citet{Xu2021minimum} considered a setting where two facilities must be located so that there is a minimum distance between them (not at specific given candidate locations). They showed results for two types of individual costs. The first one is, as in our case, the total distance (assuming that the facilities play a different role, and thus the agents are interested in both of them) and showed that, for any minimum distance requirement, the optimal solution for the social cost or the maximum cost can be attained by a strategyproof mechanism. The second one is the minimum distance (assuming that the facilities are of the same type, and thus the agents are interested only in their closest one), and showed that the approximation ratio of strategyproof mechanisms is unbounded. They also considered the case where the facility is obnoxious and showed a bound that depends on the minimum distance parameter. \citet{Duan2021minimum} later generalized the minimum distance setting by allowing for private fractional preferences over the two facilities.


\section{Preliminaries}
We consider the two-facility location problem with candidate locations. 
An instance $I$ of this problem consists of a set $N$ of $n \geq 2$ agents and two facilities $\{F_1, F_2\}$.
Each agent $i \in N$ has a {\em private position} $x_i \in \mathbb{R}$ on the line of real numbers, and a {\em known approval preference} $p_{ij} \in \{0,1\}$ for each $j \in [2]$, indicating whether she approves facility $F_j$ ($p_{ij}=1$) or not ($p_{ij}=0$), such that $p_{i1}+p_{i2}\geq 1$. There is also a set of $m \geq 2$ candidate locations $C$ where the facilities can be located. To be concise, we denote an instance using the tuple $I=(\bx,\bp,C)$, where $\bx = (x_i)_{i \in N}$ is the {\em position profile} of all agent positions, and $\bp = (p_{ij})_{i \in N, j \in [2]}$ is the {\em preference profile} of all agent approval preferences.

A {\em feasible solution} (or, simply, {\em solution}) is a pair $\bc =(c_1,c_2) \in C^2$ of candidate locations with $c_1 \neq c_2$, where the two facilities can be placed; that is, for each $j\in [2]$, $F_j$ is placed at $c_j$. 
A {\em mechanism} $M$ takes as input an instance $I$ of the problem and outputs a feasible solution $M(I)$. 
Our goal is to design mechanisms so that (a) some social objective function is (approximately) optimized, and (b) the agents truthfully report their private positions. 

The {\em individual cost} of an agent $i \in N$ for a solution $\bc$ is her total distance from the locations of the facilities she approves:
$$\cost_i(\bc|I) = \sum_{j \in [2]} p_{ij} \cdot d(x_i,c_j),$$
where $d(x,y) = |x-y|$ denotes the distance between any two points $x$ and $y$ on the line. Since the line is a special metric space, the distances satisfy the {\em triangle inequality}, which states that $d(x,y) \leq d(x,z) + d(z,y)$ for any three points $x$, $y$ and $z$ on the line, with the equality being true when $z \in [x,y]$. 
We consider the following two natural social objective functions that have been considered extensively within the truthful facility location literature:
\begin{itemize}
\item The {\em social cost} of a solution $\bc$ is the total individual cost of the agents:
$$\SC(c|I) = \sum_{i \in N} \cost_i(\bc).$$ 
\item The {\em max cost} of a solution $\bc$ is the maximum individual cost over all agents:
$$\MC(c|I) = \max_{i \in N} \cost_i(\bc).$$ 
\end{itemize}
The {\em approximation ratio} of a mechanism $M$ in terms of a social objective function $f \in \{\SC,\MC\}$ is the worst-case ratio (over all possible instances) of the $f$-value of the solution computed by the mechanism over the minimum possible $f$-value over all possible solutions:
$$\sup_{I} \frac{f(M(I)|I)}{\min_{\bc \in C^2}f(c|I)}.$$

A mechanism is said to be {\em strategyproof} if the solution $M(I)$ it returns when given as input any instance $I=(\bx,\bp,C)$ is such that there is no agent $i$ with incentive to misreport a position $x_i' \neq x_i$ to decrease her individual cost, that is, 
\begin{align*}
\cost_i(M(I)|I) \leq \cost_i(M((x_i',\bx_{-i}),\bp,C)|I),
\end{align*}
where $(x_i',\bx_{-i})$ is the position profile obtained by $\bx$ when only agent $i$ reports a different position $x_i'$.

Finally, let us introduce some further notation and terminology that will be useful. 
For each $j \in [2]$, we denote by $N_j$ the set of agents that approve facility $F_j$, i.e.,  $i \in N_j$ if $p_{ij}=1$. Any agent that approves both facilities belongs to the intersection $N_1 \cap N_2$ and has a {\em doubleton preference}. Any agent that approves one facility belongs to either $N_1 \setminus N_2$ or $N_2 \setminus N_1$ and has a {\em singleton preference}.  
Besides {\em general} instances (with agents that have any type of approval preferences), we will also pay particular attention to the following two classes of instances: 
\begin{itemize}
    \item {\em Doubleton:} All agents have a doubleton preference, that is, $N_1 \cap N_2=N$; 
    \item {\em Singleton:} All agents have a singleton preference, that is, $N_1 \cap N_2 = \varnothing$.
\end{itemize}
We will also denote by $m_j$, $\ell_j$, and $r_j$ the median\footnote{Without loss of generality, we break potential  ties in favor of the leftmost median agent.}, leftmost, and rightmost, respectively, agent in $N_j$. In addition, for any agent $i$ we denote by $t(i)$ and $s(i)$ the closest and the second closest, respectively, candidate location to $i$.


\section{Social cost} \label{sec:sc}
In this section we will focus on the social cost. We will show that the best possible approximation ratio of strategyproof mechanisms is between $1+\sqrt{2}$ and $3$ for doubleton instances, exactly $3$ for singleton instances, and between $3$ and $7$ for general instances.  

\subsection{Doubleton instances} \label{sec:sc-doubleton}
We start with the case of doubleton instances in which all agents approve both facilities. Recall that for the single-facility location problem, \citet{feldman2016voting} showed that the best possible approximation ratio of $3$ is achieved by the {\sc Median} mechanism, which places the facility at the candidate location closest to the position reported by the median agent $m$. We can generalize this mechanism by placing the two facilities at the two candidate locations that are closest to the position reported by $m$; that is, $F_1$ is placed at $w_1=t(m)$ and $F_2$ is placed at $w_2=s(m)$; see Mechanism~\ref{mech:median}. It is not hard to show that this is a strategyproof mechanism; the median agent minimizes her cost and any other agent would have to become the median agent to manipulate the outcome which could only lead to placing the facilities farther away. We next show that the mechanism achieves an approximation ratio of at most $3$, but cannot do better; we remark that this result has also been independently shown by \citet{gai2024mixed} when all agents are of type-II in their model. 

\newcommand\mycommfont[1]{\normalfont\textcolor{blue}{#1}}
\SetCommentSty{mycommfont}
\begin{algorithm}[h]
\SetNoFillComment
\caption{\sc Median}
\label{mech:median}
{\bf Input:} Reported positions of agents with doubleton preferences\;
{\bf Output:} Facility locations $\bw = (w_1,w_2)$\;
$m \gets$ median agent in $N_1 \cap N_2$\;
$w_1 \gets t(m)$\;
$w_2 \gets s(m)$\;
\end{algorithm}

\begin{theorem}\label{thm:sc-median-3}
For doubleton instances, the approximation ratio of the {\sc Median} mechanism is at most $3$, and this is tight. 
\end{theorem}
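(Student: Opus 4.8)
The plan is to handle the two directions separately, with the upper bound resting on a mild strengthening of the single-facility analysis of \citet{feldman2016voting}.

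For the upper bound I would first exploit that on homogeneous instances the social cost decomposes facility by facility: since $p_{i1}=p_{i2}=1$ for every agent, $\SC(\bc|I)=\sum_{i\in N}d(x_i,c_1)+\sum_{i\in N}d(x_i,c_2)$. Writing $g(c):=\sum_{i\in N}d(x_i,c)$, the {\sc Median} mechanism returns $w_1=t(m)$ and $w_2=s(m)$ and incurs cost $g(w_1)+g(w_2)$, while any optimal solution $(o_1,o_2)$, which we may assume satisfies $g(o_1)\le g(o_2)$, has cost $g(o_1)+g(o_2)$. The one ingredient I need is the following claim, which is essentially what the proof of the $3$-approximation for the single-facility {\sc Median} mechanism already gives: \emph{if $w$ and $o$ are points with $d(x_m,w)\le d(x_m,o)$, where $x_m$ is the position of the median agent, then $g(w)\le 3\,g(o)$}. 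Its proof is the usual one: at least $n/2$ agents lie weakly on the side of $x_m$ away from $o$, and each of them is at least as close to $w$ as to $o$ (a short case analysis using $d(x_m,w)\le d(x_m,o)$); each of the remaining (at most $n/2$) agents pays at most $2\,d(x_m,o)$ more at $w$ than at $o$, by the triangle inequality; and the first group also forces $n\cdot d(x_m,o)\le 2\,g(o)$, so the total overhead is at most $2\,g(o)$.

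Given the claim, the bound follows by a short matching argument. Since $w_1=t(m)$ is the candidate closest to $x_m$, we have $d(x_m,w_1)\le d(x_m,o_1)$, hence $g(w_1)\le 3\,g(o_1)$. For $w_2=s(m)$, the candidate closest to $x_m$ among all candidates other than $w_1$: since $o_1\ne o_2$, at least one of $o_1,o_2$ is distinct from $w_1$, and for such an $o_k$ we get $d(x_m,w_2)\le d(x_m,o_k)$ and therefore $g(w_2)\le 3\,g(o_k)$. If $o_2\ne w_1$ we may take $k=2$; otherwise (which forces $o_1\ne w_1$) we take $k=1$ and use $g(w_2)\le 3\,g(o_1)\le 3\,g(o_2)$. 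Either way $\SC(M(I)|I)=g(w_1)+g(w_2)\le 3\,g(o_1)+3\,g(o_2)=3\cdot\OPT$. For tightness, I would take the classical lower-bound configuration for the single-facility {\sc Median} mechanism and split each of its two candidates into a tiny pair, so that the mechanism is still lured into selecting the pair near the median while the optimum can use the other pair. Concretely, take $n=2k+1$ agents with $k+1$ of them at position $1/2-\epsilon$ and the remaining $k$ at position $1$, and candidates at $0$, $\eta$, $1$, $1+\eta$ for tiny $\epsilon,\eta>0$. The median agent lies in the left cluster, so the mechanism places both facilities among $\{0,\eta\}$ at total cost tending to $3k+1$ as $\epsilon,\eta\to 0$, whereas placing the facilities at $\{1,1+\eta\}$ costs roughly $k+1$; letting $k\to\infty$ drives the ratio to $3$.

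The step I expect to be the most delicate is transferring the $3$-bound from $w_1$ to $w_2$: the single-facility analysis is normally phrased for the candidate that is globally closest to the median, and one has to notice that its proof only uses that $w$ is no farther from $x_m$ than the reference point $o$ it is compared against — exactly the property that the second-closest candidate $w_2$ still enjoys once $w_1$ is excluded — and then handle the borderline configuration in which the candidate closest to the median happens to be the cheaper of the two optimal candidates, where only $g(w_2)\le 3\,g(o_1)$ is available and one must invoke $g(o_1)\le g(o_2)$ to close the argument.
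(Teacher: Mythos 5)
Your proof is correct, and it reaches the bound by a somewhat different route than the paper, so a comparison is worthwhile. The paper's upper bound is a single aggregate chain: it uses the median's optimality in the form $2\sum_{i}d(i,m)\le \sum_i d(i,o_1)+\sum_i d(i,o_2)=\SC(\opt)$, matches $t(m)$ and $s(m)$ to the two optimal candidates exactly as you do (at most one candidate, namely $t(m)$, can be strictly closer to $m$ than $s(m)$), and then absorbs the resulting terms $n\cdot d(m,o)$ back into $\SC(\opt)$ via the triangle inequality, so the factor $3$ is extracted jointly for the pair. You instead isolate a per-facility lemma --- $d(x_m,w)\le d(x_m,o)$ implies $g(w)\le 3\,g(o)$, proved by the standard ``half the agents lie on the far side of the median'' counting argument --- and apply it twice through the same matching. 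Your per-facility lemma and its proof are sound (the far-side agents satisfy $d(x_i,w)\le d(x_i,x_m)+d(x_m,w)\le d(x_i,o)$, the near-side agents overpay by at most $2d(x_m,o)$ each, and $n\,d(x_m,o)\le 2g(o)$), and your handling of the borderline case $o_2=t(m)$ via the normalization $g(o_1)\le g(o_2)$ correctly closes the gap; the paper's symmetric charge $2\sum_i d(i,m)\le\SC(\opt)$ makes that case invisible, which is the main structural difference. What your version buys is modularity: it makes explicit that the two-facility bound is literally the single-facility median bound applied twice under a matching of $\{t(m),s(m)\}$ to $\{o_1,o_2\}$. Your tightness construction is the same idea as the paper's (two tight candidate pairs, with the median lured to the wrong pair); the paper achieves ratio tending to $3$ already with two agents (positions $1/2-\varepsilon$ and $1$, candidates at $0,\varepsilon,1-\varepsilon,1$), whereas you additionally send the number of agents to infinity, but both instances are valid witnesses.
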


\begin{proof}
Let $\opt = (o_1,o_2)$ be an optimal solution. 
Since the position of the median agent minimizes the total distance of all agents, we have that  
$$\sum_{i \in N} d(i,m) \leq \sum_{i \in N} d(i,x)$$ 
for any point $x$ of the line (including $o_1$ and $o_2$), and thus
$$2\sum_{i \in N} d(i,m) \leq \sum_{i \in N} d(i,o_1) + \sum_{i \in N} d(i,o_2) = \SC(\opt).$$ 
Also, since $t(m)$ and $s(m)$ are the two closest candidate locations to $m$, we have that $d(m,t(m)) \leq d(m,x)$ for any candidate location $x$, and there exists $o \in \{o_1,o_2\}$ such that $d(m,s(m)) \leq d(m,o)$; let $\tilde{o} \in \{o_1,o_2\}\setminus \{o\}$. 
Therefore, using these facts and the triangle inequality, we obtain
\begin{align*}
\SC(\bw) &= \sum_{i \in N} \bigg( d(i,t(m)) + d(i,s(m)) \bigg) \\
&\leq 2\sum_{i \in N} d(i,m) + \sum_{i \in N} d(m,t(m)) + \sum_{i \in N} d(m,s(m)) \\
&\leq \SC(\opt) + \sum_{i \in N} d(m,\tilde{o}) + \sum_{i \in N} d(m,o) \\
&\leq \SC(\opt) + 2\sum_{i \in N} d(i,m) + \sum_{i \in N}d(i,\tilde{o}) + \sum_{i \in N} d(i,o) \\
&\leq 3\cdot \SC(\opt).
\end{align*}

The analysis of the mechanism is tight due to the following instance: 
There are four candidate locations at $0$, $\varepsilon$, $1-\varepsilon$, and $1$, for some infinitesimal $\varepsilon > 0$. 
There are also two agents positioned at $1/2-\varepsilon$ and $1$, respectively. 
Let the first agent be the median one (in case the second agent is the median, there is a symmetric instance). 
Then, the two facilities are placed at $0$ and $\varepsilon$ for a social cost of approximately $3$, whereas the optimal solution is to place the facilities at $1-\varepsilon$ and $1$ for a social cost of approximately $1$, leading to a lower bound of nearly $3$. 
\end{proof}

We next show a lower bound of $1+\sqrt{2}$ on the approximation ratio of any strategyproof mechanism.  

\begin{theorem} \label{thm:sc-doubleton-lower}
For doubleton instances, the approximation ratio of any strategyproof mechanism is at least $1+\sqrt{2}-\delta$, for any $\delta > 0$. 
\end{theorem}

\begin{proof}
Let $\varepsilon > 0$ be an infinitesimal. 
We will consider instances with four candidate locations, two in the $\varepsilon$-neighborhood of $0$ (for example, $-\varepsilon$ and $\varepsilon$) and two in the $\varepsilon$-neighborhood of $2$ (for example, $2-\varepsilon$ and $2+\varepsilon)$. To simplify the calculations in the remainder of the proof, we will assume that there can be candidate locations at the same point of the line, so that we have two candidate locations at $0$ and two at $2$. 

First, consider the following generic instance $I$ with the aforementioned candidate locations: There is at least one agent at $0$, at least one agent at $2$, while each remaining agent is arbitrarily located at a location from $\{0,1-\varepsilon,1+\varepsilon, 2\}$.
We make the following observation: 
Any solution returned by a strategyproof mechanism when given as input $I$ must also be returned when given as input any of the following two instances:
\begin{itemize}
    \item $J_1$: Same as $I$ with the difference that an agent $j_1$ has been moved from $0$ to $1-\varepsilon$. 
    \item $J_2$: Same as $I$ with the difference that an agent $j_2$ has been moved from $2$ to $1+\varepsilon$. 
\end{itemize}
Suppose towards a contradiction that this is not true for $J_1$; similar arguments can be used for $J_2$. We consider the following cases:
\begin{itemize}
    \item Both facilities are placed at $2$ in $I$. If this is not done in $J_1$, then $j_1$ can misreport her position as $1-\varepsilon$ in $I$ so that the instance becomes $J_1$ and at least one facility moves to her true position $0$.
    
    \item Both facilities are placed at $0$ in $I$. If this is not done in $J_1$, then $j_1$ can misreport her position as $0$ in $J_1$ so that the instance becomes $I$ and both facilities move to $0$ which is closer to her true position $1-\varepsilon$, a contradiction.
    
    \item One facility is placed at $0$ and the other is placed at $2$ in $I$.
    Observe that it cannot be the case that both facilities are placed at $0$ in $J_1$ since that would mean that $j_1$ can misreport her position in $I$ as $1-\varepsilon$ so that the instance becomes $J_1$ and both facilities move to her true position $0$. So, the only possibility of having a different solution in $I$ and $J_1$ is that both facilities are placed at $2$ in $J_1$. But then, $j_1$ can misreport her true position as $0$ in $J_1$ so that the instance becomes $I$ and one of the facilities moves to $0$ which is closer to her true position. 
\end{itemize}
Hence, the same solution must be computed by the mechanism when given $I$ or $J_1$ as input.

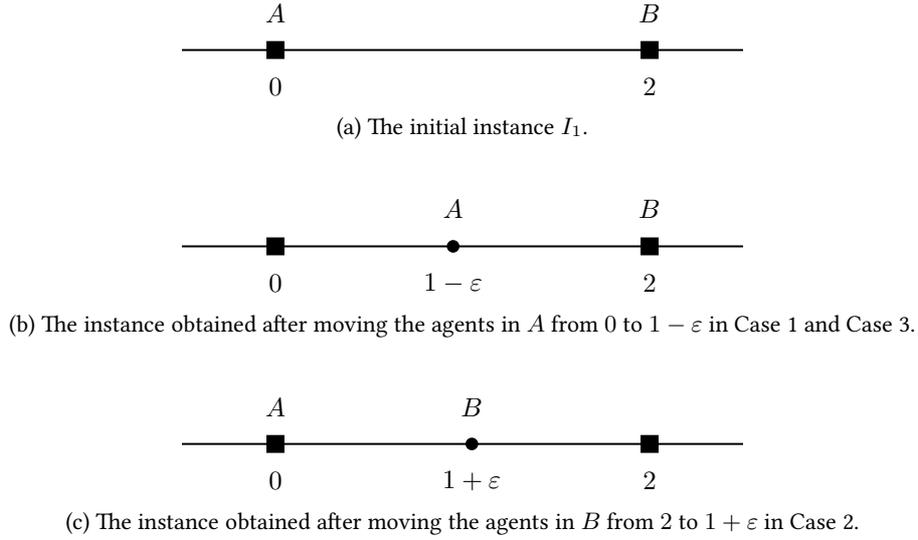
\begin{figure}[t]
\tikzset{every picture/.style={line width=0.75pt}} 
\centering
\begin{subfigure}[t]{\linewidth}
\centering
\begin{tikzpicture}[x=0.7pt,y=0.7pt,yscale=-1,xscale=1]
\draw [line width=0.75]  (0,0) -- (300,0) ;
\filldraw ([xshift=-3pt,yshift=-3pt]50,0) rectangle ++(6pt,6pt);
\filldraw ([xshift=-3pt,yshift=-3pt]250,0) rectangle ++(6pt,6pt);

\draw (50,-20) node [inner sep=0.75pt]  [font=\small]  {$A$};
\draw (250,-20) node [inner sep=0.75pt]  [font=\small]  {$B$};

\draw (50,20) node [inner sep=0.75pt]  [font=\small]  {$0$};
\draw (250,20) node [inner sep=0.75pt]  [font=\small]  {$2$};

\end{tikzpicture}
\caption{The initial instance $I_1$.}
\label{fig:sc-doubleton-lower-a}
\end{subfigure}
\\[20pt]
\begin{subfigure}[t]{\linewidth}
\centering
\begin{tikzpicture}[x=0.7pt,y=0.7pt,yscale=-1,xscale=1]
\draw [line width=0.75]  (0,0) -- (300,0) ;
\filldraw ([xshift=-3pt,yshift=-3pt]50,0) rectangle ++(6pt,6pt);
\filldraw ([xshift=-3pt,yshift=-3pt]250,0) rectangle ++(6pt,6pt);
\filldraw (145,0) circle (2pt);

\draw (145,-20) node [inner sep=0.75pt]  [font=\small]  {$A$};
\draw (250,-20) node [inner sep=0.75pt]  [font=\small]  {$B$};

\draw (50,20) node [inner sep=0.75pt]  [font=\small]  {$0$};
\draw (250,20) node [inner sep=0.75pt]  [font=\small]  {$2$};

\draw (145,20) node [inner sep=0.75pt]  [font=\small]  {$1-\varepsilon$};

\end{tikzpicture}
\caption{The instance obtained after moving the agents in $A$ from $0$ to $1-\varepsilon$ in Case 1 and Case 3.}
\label{fig:sc-doubleton-lower-b}
\end{subfigure}
\\[20pt]
\begin{subfigure}[t]{\linewidth}
\centering
\begin{tikzpicture}[x=0.7pt,y=0.7pt,yscale=-1,xscale=1]
\draw [line width=0.75]  (0,0) -- (300,0) ;
\filldraw ([xshift=-3pt,yshift=-3pt]50,0) rectangle ++(6pt,6pt);
\filldraw ([xshift=-3pt,yshift=-3pt]250,0) rectangle ++(6pt,6pt);
\filldraw (155,0) circle (2pt);

\draw (50,-20) node [inner sep=0.75pt]  [font=\small]  {$A$};
\draw (155,-20) node [inner sep=0.75pt]  [font=\small]  {$B$};

\draw (50,20) node [inner sep=0.75pt]  [font=\small]  {$0$};
\draw (250,20) node [inner sep=0.75pt]  [font=\small]  {$2$};

\draw (155,20) node [inner sep=0.75pt]  [font=\small]  {$1+\varepsilon$};
\end{tikzpicture}
\caption{The instance obtained after moving the agents in $B$ from $2$ to $1+\varepsilon$ in Case 2.}
\label{fig:sc-doubleton-lower-c}
\end{subfigure}
\caption{The instances used in the proof of the lower bound of $1+\sqrt{2}$ in terms of the social cost for doubleton instances (Theorem~\ref{thm:sc-doubleton-lower}). Set $A$ consists of $\alpha n$ agents and set $B$ consists of $(1-\alpha)n$ agents; all of them approve both facilities. Rectangles represent candidate locations; recall that we assume that there are two candidate locations arbitrarily close to $0$ and two candidate locations arbitrarily close to $2$.}
\label{fig:sc-doubleton-lower}
\end{figure}

Now, consider an arbitrary strategyproof mechanism and let $\alpha = \sqrt{2}-1$; note that $\alpha$ is such that $\frac{1+\alpha}{1-\alpha} = \frac{1}{\alpha}=1+\sqrt{2}$. Let $I_1$ be the following instance with the aforementioned candidate locations: $\alpha n$ agents are at $0$ and $(1-\alpha)n$ agents are at $2$. See Figure~\ref{fig:sc-doubleton-lower-a}. We consider the following cases depending on the solution returned by the mechanism when given $I_1$ as input:

\medskip
\noindent
{\bf Case 1:} The mechanism places both facilities at $0$. 
We consider the sequence of instances obtained by moving one by one the $\alpha n$ agents that are positioned at $0$ in $I_1$ to $1-\varepsilon$; see Figure~\ref{fig:sc-doubleton-lower-b}. By the observation above, the mechanism must return the same solution for any two consecutive instances of this sequence (essentially, the first one is of type $I$ and the second one is of type $J_1$), which means that the mechanism must eventually return the same solution for all of them. Therefore, the mechanism must place both facilities at $0$ in the last instance of this sequence, where $\alpha n$ agents are at $1-\varepsilon$ and the remaining $(1-\alpha)n$ agents are at $2$. This solution has social cost $2\alpha n + 4(1-\alpha)n = 2(2-\alpha)n$. However, the solution that places both facilities at $2$ has social cost $2\alpha n$, leading to an approximation ratio of  $\frac{2}{\alpha}-1 > 1+\sqrt{2}$. 

\medskip
\noindent
{\bf Case 2:} The mechanism places both facilities at $2$. 
Similarly to Case 1 above, we now consider the sequence of instances obtained by moving one by one the $(1-\alpha) n$ agents that are positioned at $2$ in $I_1$ to $1+\varepsilon$; see Figure~\ref{fig:sc-doubleton-lower-c}. Again, by the observation above, the mechanism must return the same solution for any two consecutive instances of this sequence (the first one is of type $I$ and the second one is of type $J_2$), which means that the mechanism must eventually return the same solution for all of them. 
Therefore, the mechanism must place both facilities at $2$ in the last instance of this sequence, where $\alpha n$ agents are at $0$ and the remaining $(1-\alpha)n$ agents are at $1+\varepsilon$. 
This solution has social cost $4\alpha n + 2(1-\alpha)n = 2(1+\alpha)n$. 
However, the solution that places both facilities at $0$ has social cost $2(1-\alpha) n$, leading to an approximation ratio of  $\frac{1+\alpha}{1-\alpha}=1+\sqrt{2}$.

\medskip
\noindent
{\bf Case 3:} The mechanism places one facility at $0$ and the other at $2$.
We consider the same sequence of instances as in Case 1. This results in that the mechanism must place one facility at $0$ and the other at $2$ when given as input the instance where $\alpha n$ agents are at $1-\varepsilon$ while the remaining $(1-\alpha)n$ agents are at $2$. This solution has social cost $2\alpha n + 2(1-\alpha)n = 2n$. However, the solution that places both facilities at $2$ has social cost $2\alpha n$, leading to an approximation ratio of  $\frac{1}{\alpha} = 1+\sqrt{2}$. 
\end{proof}


\subsection{Singleton instances} \label{sec:sc-singleton}
It is not hard to observe that our two-facility problem with singleton instances is more general than the single-facility location problem studied by \citet{feldman2016voting}; indeed, there are singleton instances in which all agents approve the same facility, and thus the location of the other facility does not affect the social cost nor the approximation ratio. Consequently, we cannot hope to achieve an approximation ratio better than $3$. For completeness, we include here a slightly different proof of the lower bound of $3$ for all strategyproof mechanisms with instances that involve agents that approve different facilities. Recall that, for singleton instances, $N_1 \cap N_2 = \varnothing$.

\begin{theorem} \label{thm:sc-general-lower}
For singleton instances, the approximation ratio of any strategyproof mechanism is at least $3-\delta$, for any $\delta > 0$. 
\end{theorem}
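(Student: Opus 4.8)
The plan is to reduce to (a profile equivalent to) the discrete single-facility location problem on two candidate locations, realised through an instance in which two agents approve \emph{different} facilities. Concretely, take $C=\{0,1\}$ and two agents: agent $a$ approving only $F_1$ and agent $b$ approving only $F_2$, with reported positions $x_a,x_b\in[0,1]$. Since $|C|=2$ and the two facilities must occupy distinct candidates, any mechanism returns one of only two solutions: $\sigma_1$, which places $F_1$ at $0$ and $F_2$ at $1$, or $\sigma_2$, which places $F_1$ at $1$ and $F_2$ at $0$. A one-line computation gives $\SC(\sigma_1)=x_a+(1-x_b)$ and $\SC(\sigma_2)=(1-x_a)+x_b$, that agent $a$ strictly prefers $\sigma_1$ iff $x_a<1/2$, and that agent $b$ strictly prefers $\sigma_1$ iff $x_b>1/2$. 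So, after relabelling $x_b$ as $1-x_b$, this family of instances is exactly single-facility location on the candidates $\{0,1\}$, with $\sigma_1$ playing the role of ``facility at $0$''.

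Fix an arbitrary strategyproof mechanism $M$; we may assume its approximation ratio is finite, else we are done. I would use two facts. First, strategyproofness forces a threshold structure: fixing $x_b$ and varying $x_a$, the mechanism is either constantly $\sigma_1$, constantly $\sigma_2$, or outputs $\sigma_1$ precisely for $x_a<1/2$ and $\sigma_2$ precisely for $x_a>1/2$; indeed, if it ever output $\sigma_2$ at some $x_a<1/2$ while outputting $\sigma_1$ at some other report, agent $a$ would misreport into $\sigma_1$, and symmetrically for $x_a>1/2$ and $\sigma_1$. The analogous statement holds fixing $x_a$ and varying $x_b$, with the roles of $\sigma_1$ and $\sigma_2$ exchanged (since $b$ prefers $\sigma_1$ on the right). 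Second, finiteness of the ratio pins $M$ down at two ``aligned'' profiles: at $(x_a,x_b)=(1,0)$ the solution $\sigma_2$ has social cost $0$, so $M(1,0)=\sigma_2$; similarly $M(0,1)=\sigma_1$.

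I would then case-split on the restriction $x_a\mapsto M(x_a,0)$. Since $M(1,0)=\sigma_2$, this restriction is not constantly $\sigma_1$, so it is either constantly $\sigma_2$ or the genuine threshold function. If it is the threshold function, then $M(1/2-\varepsilon,0)=\sigma_1$, an instance on which $\SC(\sigma_1)=(1/2-\varepsilon)+1$ and $\SC(\sigma_2)=(1/2+\varepsilon)+0$, giving ratio $(3/2-\varepsilon)/(1/2+\varepsilon)\to 3$. If instead it is constantly $\sigma_2$, then $M(0,0)=\sigma_2$; now the restriction $x_b\mapsto M(0,x_b)$ takes value $\sigma_2$ at $x_b=0$ and value $\sigma_1$ at $x_b=1$, so by the threshold structure it must output $\sigma_2$ for all $x_b<1/2$, giving $M(0,1/2-\varepsilon)=\sigma_2$, on which $\SC(\sigma_2)=1+(1/2-\varepsilon)$ and $\SC(\sigma_1)=0+(1/2+\varepsilon)$ — again ratio $(3/2-\varepsilon)/(1/2+\varepsilon)\to 3$. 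Letting $\varepsilon\to 0$ gives the claimed bound of $3-\delta$ for every $\delta>0$.

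The main subtlety — and the step I expect to need the most care — is that, unlike in single-facility location, two agents who approve different facilities can be served \emph{simultaneously} at essentially zero cost, so the profiles on which a finite-ratio mechanism is forced to commit are the ones with $a$ and $b$ at opposite endpoints, not co-located. Getting the direction of the two threshold arguments right, and in particular splitting first on the $x_a$-coordinate (where $M(\cdot,0)$ cannot be constantly $\sigma_1$) rather than the $x_b$-coordinate, is what makes both cases close; choosing the other coordinate leads to instances on which the forced output is actually optimal. One also has to argue the threshold claim itself with a little care, since strategyproofness only forbids \emph{profitable} deviations: the argument is that if $M$ produced the ``wrong'' solution at some wrong-side position while ever producing the preferred one, that agent would deviate to the latter report.
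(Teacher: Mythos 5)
Your proof is correct and is essentially the paper's argument: both work on a two-candidate instance with two agents approving different facilities, and both use a single-agent strategyproofness deviation (an agent on the ``wrong side'' of the midpoint cannot flip the outcome in her favor) to force the mechanism into a profile where the chosen solution costs roughly $3$ times the optimum, namely one agent just off the midpoint served at the far candidate's expense of the other agent sitting at an endpoint. The only difference is packaging: the paper starts from the symmetric profile with both agents co-located at $\varepsilon$ and invokes a without-loss-of-generality choice before moving one agent, whereas you replace that step by a threshold characterization of single-coordinate restrictions anchored at the zero-cost profiles $(1,0)$ and $(0,1)$ --- a slightly longer but equally valid route to the same two bad instances.
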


\begin{proof}
Let $\varepsilon >0$ be an infinitesimal and consider an instance $I_1$ with two candidate locations at $-1$ and $1$, and two agents positioned at $\varepsilon > 0$ such that one of them approves $F_1$ while the other approves $F_2$; see Figure~\ref{fig:sc-general-lower-a}.
There are two possible solutions, $(-1,1)$ or $(1,-1)$.  
Without loss of generality, suppose that $(1,-1)$ is the solution chosen by an arbitrary strategyproof mechanism.

\begin{figure}[t]
\tikzset{every picture/.style={line width=0.75pt}} 
\centering
\begin{subfigure}[t]{0.45\linewidth}
\centering
\begin{tikzpicture}[x=0.7pt,y=0.7pt,yscale=-1,xscale=1]
\draw [line width=0.75]  (0,0) -- (300,0) ;
\filldraw ([xshift=-3pt,yshift=-3pt]50,0) rectangle ++(6pt,6pt);
\filldraw ([xshift=-3pt,yshift=-3pt]250,0) rectangle ++(6pt,6pt);
\filldraw (155,0) circle (2pt);

\draw (155,-20) node [inner sep=0.75pt]  [font=\small]  {$i,j$};

\draw (50,20) node [inner sep=0.75pt]  [font=\small]  {$-1$};
\draw (250,20) node [inner sep=0.75pt]  [font=\small]  {$1$};
\draw (155,20) node [inner sep=0.75pt]  [font=\small]  {$\varepsilon$};

\end{tikzpicture}
\caption{Instance $I_1$.}
\label{fig:sc-general-lower-a}
\end{subfigure}
\\[20pt]
\begin{subfigure}[t]{0.45\linewidth}
\centering
\begin{tikzpicture}[x=0.7pt,y=0.7pt,yscale=-1,xscale=1]
\draw [line width=0.75]  (0,0) -- (300,0) ;
\filldraw ([xshift=-3pt,yshift=-3pt]50,0) rectangle ++(6pt,6pt);
\filldraw ([xshift=-3pt,yshift=-3pt]250,0) rectangle ++(6pt,6pt);
\filldraw (155,0) circle (2pt);

\draw (155,-20) node [inner sep=0.75pt]  [font=\small]  {$i$};
\draw (250,-20) node [inner sep=0.75pt]  [font=\small]  {$j$};

\draw (50,20) node [inner sep=0.75pt]  [font=\small]  {$-1$};
\draw (250,20) node [inner sep=0.75pt]  [font=\small]  {$1$};
\draw (155,20) node [inner sep=0.75pt]  [font=\small]  {$\varepsilon$};

\end{tikzpicture}
\caption{Instance $I_2$.}
\label{fig:sc-general-lower-b}
\end{subfigure}
\caption{The two instances used in the proof of the lower bound of $3$ in terms of the social cost for the general case (Theorem~\ref{thm:sc-general-lower}). Agent $i$ approves $F_1$ and agent $j$ approves $F_2$. Rectangles represent candidate locations.}
\label{fig:sc-general-lower}
\end{figure}
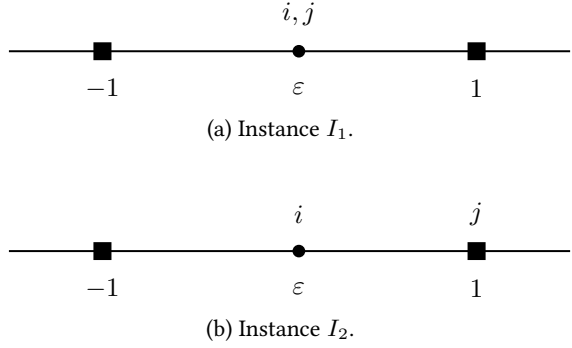

Next, consider instance $I_2$, which is the same as $I_1$, with the only difference that the agent that approves $F_2$ is moved from $\varepsilon$ to $1$; see Figure~\ref{fig:sc-general-lower-b}. To maintain strategyproofness, the solution $(1,-1)$ must be returned in $I_2$ as well; otherwise, the moving agent would have decreased her cost in $I_1$ from $1+\varepsilon$ to $1-\varepsilon$. This solution has social cost $1-\varepsilon+2=3-\varepsilon$, whereas the other solution $(-1,1)$ has social cost just $1+\varepsilon$, leading to a lower bound of $3-\delta$, for any $\delta > 0$.
\end{proof}

Since there is an adaptation of the {\sc Median} mechanism that achieves an approximation ratio of $3$ for doubleton instances (see Theorem~\ref{thm:sc-median-3}), one might wonder if there is a variant that can do so for singleton instances as well. In particular, the natural extension of {\sc Median} is to place $F_1$ at the candidate location closest to the (leftmost) median agent $m_1$ of $N_1$, and $F_2$ at the available candidate location closest to the (leftmost) median agent $m_2$ of $N_2$. While this seems like a good idea at first glance, the following example shows that it fails to achieve the desired approximation ratio bound. 

\begin{example}
Consider an instance with two candidate locations at $0$ and $2$. For some $x \geq 1$, there are $2x+1$ agents that approve only $F_1$ such that $x+1$ of them are located at $1-\varepsilon$ and the other $x$ are located at $2$. There are also $2x+1$ agents that approve only $F_2$ and are all located at $0$. According to the definition of the mechanism, $F_1$ is placed at $0$ (which is the candidate location closest to the median agent in $N_1$), and then $F_2$ is placed at $2$ as $0$ is now occupied and $2$ is available. This solution has social cost approximately $(x+2x) + 4x = 7x$, whereas the solution that places $F_1$ at $2$ and $F_2$ at $0$ has social cost approximately $x$, leading to an approximation ratio of nearly $7$. 
\end{example}

The issue with the aforementioned variant of the {\sc Median} mechanism is the order in which it decides to place the facilities. If it were to place $F_2$ first and $F_1$ second then it would have made the optimal choice in the example. However, there is a symmetric example that would again lead to a lower bound of approximately $7$. So, the mechanism needs to be able to dynamically determine the order in which it places $F_1$ and $F_2$. This brings us to the following idea: 
We will again place each facility one after the other at the closest candidate location to the median among the agents that approve it. However, the facility that is placed first (and thus has priority in case the median agents of $N_1$ and $N_2$ are closer to the same candidate location) is the one with stronger majority in terms of the number of agents that approve it who are closer to the top choice of the median agent rather than her second choice; ties are broken in favor of the facility that is approved by most agents, which is assumed to be $F_1$ without loss of generality. We refer to this mechanism as $\PCM$; see Mechanism~\ref{mech:sc-PCM} for a more formal description. 

\SetCommentSty{mycommfont}
\begin{algorithm}[h]
\SetNoFillComment
\caption{$\PCM$}
\label{mech:sc-PCM}
{\bf Input:} Reported positions of agents with singleton preferences\;
{\bf Output:} Facility locations $\bw = (w_1,w_2)$\;
\For{$j \in [2]$}{
$m_j \gets$ median agent in $N_j$\;
$S_j \gets$ set of agents in $N_j$ (weakly) closer to $t(m_j)$ than to $s(m_j)$\;
}
\uIf{$2|S_1|-|N_1| \geq 2|S_2|-|N_2|$}{ 
    $j \gets 1$\;
}
\Else{
    $j \gets 2$\;
}

$w_j \gets t(m_j)$\;
\uIf{$t(m_{3-j})$ is available}{
    $w_{3-j} \gets t(m_{3-j})$\;
}
\Else{
    $w_{3-j} \gets s(m_{3-j})$\;
}
\end{algorithm}

We first show that this mechanism is strategyproof; it is not hard to observe that this must be true as the mechanism is a composition of variants of two simple strategyproof mechanisms (median plus majority voting). 

\begin{theorem}
$\PCM$ is strategyproof.  
\end{theorem}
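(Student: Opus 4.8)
The plan is to prove that $\PCM$ is strategyproof by presenting it as a composition of two strategyproof ingredients: the discrete {\sc Median} rule (used to choose $t(m_j)$ for each facility) and a monotone ``majority vote'' (used to pick the priority facility and to break clashes when $t(m_1)=t(m_2)$). For the median ingredient, note that for each $j\in[2]$ the candidate $t(m_j)$ is exactly the output of the discrete {\sc Median} rule on the reported positions of the agents in $N_j$, which is strategyproof by the analysis of \citet{feldman2016voting}. The property I would isolate is the standard monotonicity of the median: if an agent of $N_j$ reports a position on the same side of $m_j$ as her true position, then $m_j$ (and hence $t(m_j)$) is unchanged; and if she reports a position crossing $m_j$, then $m_j$, and therefore the closest candidate $t(m_j)$, moves weakly in the direction of the misreport, i.e., weakly away from her true position. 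In particular, no agent can pull any $t(m_j)$ strictly towards herself.

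For the vote ingredient, fix $m_1$ and $m_2$ (equivalently, the four candidates $t(m_1),s(m_1),t(m_2),s(m_2)$). Then each facility $F_k$ is placed at $t(m_k)$, unless $F_k$ is the non-priority facility and $t(m_k)=t(m_{3-k})$, in which case it is placed at $s(m_k)$. Hence, from the viewpoint of an agent $i$ who approves $F_k$, the only quantity at stake is whether $w_k=t(m_k)$ or $w_k=s(m_k)$, and the set $S_k$ is defined precisely so that $i\in S_k$ iff $x_i$ is weakly closer to $t(m_k)$ than to $s(m_k)$, i.e., iff $i$ prefers the outcome $w_k=t(m_k)$ that is favoured by $F_k$ obtaining priority. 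Since the priority rule is monotone in $|S_k|$ (enlarging $S_k$, all else equal, can only help $F_k$ get priority), the agent's truthful ballot is already the one most favourable to her, so she gains nothing by flipping her own membership in $S_k$.

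These two observations are combined as follows. Fix an agent $i$ and a misreport $x_i'$. If $x_i'$ lies on the same side of $m_j$ as $x_i$ for every $j$ with $i\in N_j$, then all medians are unchanged, so the misreport can only alter $i$'s own ballots, and this is covered by the vote ingredient. Otherwise $i$ crosses some median $m_j$, which by the median ingredient moves the corresponding $t(m_j)$ weakly away from $x_i$; a short case analysis on which facilities $i$ approves and on the sign of the deviation then shows that every facility $F_k$ approved by $i$ ends up weakly farther from $x_i$ than in the truthful run: $w_k$ is either $t(m_k)$, which has moved away, or $s(m_k)$, and the latter arises only when $s(m_k)$ coincides with the other facility's closest candidate, which pins $w_k$ to a location no closer to $x_i$ than the truthful outcome. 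For an agent $i\in N_1\cap N_2$ the bound is easier, since her cost $d(x_i,w_1)+d(x_i,w_2)$ is already minimised by leaving both medians where the truthful {\sc Median} rule places them.

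The step I expect to be the main obstacle is this last case: a deviation that shifts a median $m_k$ also re-partitions $S_1$ and $S_2$, changing other agents' ballots and hence possibly the identity of the priority facility, so one must rule out a \emph{net}-beneficial swing rather than just track $i$'s own vote. The cleanest route I see to close this is to view $w_k$ as a step function of the single variable $x_i'$ (with all other reports fixed) and to show it is ``single-dipped'' at $x_i$: constant and equal to the truthful location on the half-line through $x_i$, and weakly farther from $x_i$ elsewhere. This would follow from combining the monotonicity of $m_j\mapsto t(m_j)$, the monotonicity of the priority rule in $|S_k|$, and the alignment of $S_k$-membership with each agent's preference between $t(m_k)$ and $s(m_k)$, so that any change in the priority facility triggered by the deviation can only move $w_k$ away from $x_i$.
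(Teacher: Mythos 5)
Your decomposition --- a median ingredient plus a monotone majority vote, with the alignment observation that membership in $S_k$ coincides with preferring the outcome $w_k=t(m_k)$ --- is exactly the paper's reading of its own mechanism, and the monotonicity facts you isolate are the right ones. What you defer, the paper closes with a case split you do not quite articulate: when $t(m_1)\neq t(m_2)$ the vote is irrelevant, since the outcome is $(t(m_1),t(m_2))$ however the priority comparison resolves, so only the median ingredient matters; when $t(m_1)=t(m_2)$ there are only two possible outcomes (the common top candidate paired with $s(m_1)$ or with $s(m_2)$), and the paper classifies the agents of each $N_j$ by whether they are closer to $t(m_j)$ or to $s(m_j)$ and checks that each class can only push the ratio $|S_j|/|N_j|$ the wrong way. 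As written, your argument is a plan for the hardest step rather than a proof of it, so it is incomplete, though the skeleton matches the paper's.

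The concrete gap is your claim that the case $i\in N_1\cap N_2$ is ``easier.'' It is the delicate one. In the tie case $t(m_1)=t(m_2)$, flipping the priority leaves one facility at the common top candidate but trades $s(m_1)$ for $s(m_2)$ in such an agent's cost, so her preference between the two outcomes is governed by comparing $d(x_i,s(m_1))$ with $d(x_i,s(m_2))$, which need not align with her membership in $S_1$ or $S_2$. Moreover, a single small misreport --- crossing the midpoint of $t(m_1)$ and $s(m_1)$ without crossing either median --- simultaneously removes her from $S_1$ while keeping her in $S_2$, which can flip the priority in the direction she wants whenever $s(m_1)$ and $s(m_2)$ lie on opposite sides of the common top candidate and $s(m_2)$ is the distant one. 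Your ``single-dipped in $x_i'$'' program does not survive this, because the agent is optimizing a sum over two facilities whose locations move in opposite directions under a priority flip, so neither coordinate argument alone controls her cost. Be aware that the paper's own four-way classification also treats each agent through one approval at a time and is silent on agents approving both facilities; this is precisely the case you would need to argue carefully (or restrict away) to make the proof go through, not the one to wave off.
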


\begin{proof}
Clearly, if $t(m_1) \neq t(m_2)$ then no agent has incentive to deviate as then the facilities are placed at $t(m_1)$ and $t(m_2)$ independently of whether $2|S_j|-|N_j| \geq 2|S_{3-j}|-|N_{3-j}|$  or not for $j\in [2]$. 
So, it suffices to consider the case where $t(m_1) = t(m_2)$ and $2|S_j|-|N_j| \geq 2|S_{3-j}|-|N_{3-j}|$ for some $j \in [2]$, leading to $w_j = t(m_j)$ and $w_{3-j} = s(m_{3-j})$, solving ties in favor of $F_1$. 
\begin{itemize}
    \item $m_j$ and any agent $i \in N_j$ that is closer to $t(m_j)$ than to $s(m_j)$ have no incentive to deviate as $t(m_j)$ is the best choice for them. 
    \item Any agent $i \in N_j$ that is closer to $s(m_j)$ than to $t(m_j)$ has no incentive to deviate, as going closer to $t(m_j)$ can only increase the quantity $2|S_j|-|N_j|$ and cannot change the outcome.
    \item $m_{3-j}$ and any agent $i \in N_{3-j}$ that is closer to $t(m_{3-j})$ than to $s(m_{3-j})$ have no incentive to deviate as moving closer to $s(m_{3-j})$ would decrease the quantity $2|S_{3-j}|-|N_{3-j}|$ and would not change the outcome.
    \item Any agent $i \in N_{3-j}$ that is closer to $s(m_{3-j})$ than to $t(m_{3-j})$ has no incentive to deviate as $s(m_{3-j})$ is the best choice for her. 
\end{itemize}
So, the mechanism is strategyproof. 
\end{proof}

Next, we show the upper bound of $3$ on the approximation ratio.

\begin{theorem} \label{thm:sc-singleton-upper}
The approximation ratio of $\PCM$ is at most $3$. 
\end{theorem}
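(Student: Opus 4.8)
\noindent
The plan is to bound the mechanism's output $\bw$ facility by facility against an optimal solution $\opt=(o_1,o_2)$, using as a building block the single-facility estimate underlying the {\sc Median} mechanism: for any agent set $A$ with (leftmost) median $\mu$ and any candidate location $c$ one has $\sum_{i\in A}d(x_i,t(\mu))\le 3\sum_{i\in A}d(x_i,c)$, and moreover $\sum_{i\in A}d(x_i,s(\mu))\le 3\sum_{i\in A}d(x_i,c)$ whenever $c\ne t(\mu)$ (in the latter case $d(\mu,c)\ge d(\mu,s(\mu))$, so the same chain of triangle inequalities together with the median property of $\mu$ still goes through). Assume without loss of generality that $F_1$ is the facility given priority, so that $w_1=t(m_1)$; since ties in the priority test are broken towards $F_1$, this also means $|N_1|\ge|N_2|$ and $|S_1|/|N_1|\ge|S_2|/|N_2|$. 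If $t(m_1)\ne t(m_2)$, then $t(m_2)$ is available, so $w_2=t(m_2)$, and applying the building block to $N_1$ with $c=o_1$ and to $N_2$ with $c=o_2$ and adding the two inequalities gives $\SC(\bw)\le 3\,\SC(\opt)$.

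It thus remains to treat $t(m_1)=t(m_2)=:t$, where $w_1=t$ and $w_2=s(m_2)$, and here I would split on the optimal solution. If $o_1=t$ (hence $o_2\ne t$), the $F_1$-contributions to $\SC(\bw)$ and to $\SC(\opt)$ coincide, and the building block applied to $N_2$ at $s(m_2)$ with $c=o_2$ finishes the proof. If $o_1\ne t$ and $o_2\ne t$, then $o_1$ is a candidate other than $t(m_1)$ and $o_2$ is a candidate other than $t(m_2)$, so the building block applies to $F_1$ at $t$ with $c=o_1$ and to $F_2$ at $s(m_2)$ with $c=o_2$, again yielding $\SC(\bw)\le 3\,\SC(\opt)$. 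The one remaining subcase, $o_2=t$ and $o_1\ne t$, is the heart of the argument: the optimal solution assigns the common best candidate $t$ to $F_2$, whereas the mechanism assigns it to $F_1$.

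To handle it I would first isolate what the mechanism pays on $F_2$: writing $N_2=S_2\cup(N_2\setminus S_2)$, every $i\in N_2\setminus S_2$ has $d(x_i,s(m_2))\le d(x_i,t)=d(x_i,o_2)$ and every $i\in S_2$ has $d(x_i,s(m_2))\le d(x_i,o_2)+d(t,s(m_2))$, so that $\sum_{i\in N_2}d(x_i,s(m_2))\le \sum_{i\in N_2}d(x_i,o_2)+|S_2|\cdot d(t,s(m_2))$. It then remains to (i) bound the $F_1$-part $\sum_{i\in N_1}d(x_i,t)$ of $\SC(\bw)$ against the $F_1$-part of $\SC(\opt)$ --- this is exactly where the priority of $F_1$ should be used, its purpose being that $t$ is ``proportionally more desirable'' for $N_1$ than for $N_2$, so the optimal solution's decision to move $F_1$ off $t$ cannot be too cheap --- and (ii) charge the overflow term $|S_2|\cdot d(t,s(m_2))$ to $\SC(\opt)$, using $d(t,s(m_2))\le 2\,d(m_2,s(m_2))\le 2\,d(m_2,o_1)$ (valid because $o_1$ is a candidate distinct from $t(m_2)$), the priority inequality $|S_1|\,|N_2|\ge|S_2|\,|N_1|$, the size inequality $|N_1|\ge|N_2|$, and the median properties of $m_1$ and $m_2$. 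I expect step (ii) --- quantifying precisely how much the priority heuristic can lose to an adversarially chosen optimum, i.e.\ controlling $|S_2|\cdot d(t,s(m_2))$ by a suitable combination of $\sum_{i\in N_1}d(x_i,o_1)$ and $\sum_{i\in N_2}d(x_i,t)$ --- to be the main obstacle, since the instance that will later witness tightness (roughly equal-size clusters of $N_1$-agents just left of the midpoint between $t$ and $s(m_2)$ and of $N_2$-agents sitting exactly on $t$) leaves essentially no slack; one clean way to organize this last step may be to first reduce to instances in which $N_1$ and $N_2$ are each supported on at most two points, for which the mechanism's output and both costs become explicit functions of a handful of parameters.
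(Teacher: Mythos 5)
Your reduction is sound as far as it goes: the building block is correct (including the $s(\mu)$ variant for $c\neq t(\mu)$), the case $t(m_1)\neq t(m_2)$ and the subcases $o_1=t$ and $o_1,o_2\neq t$ are all handled correctly and match the paper's easy cases. But the subcase $o_2=t$, $o_1\neq t$ --- which you rightly call the heart of the argument --- is not proved; it is only planned, and that plan is exactly where the theorem's entire content lives. The cross-charging in your step (ii) is unavoidable and cannot be dismissed as bookkeeping: if $S_2=N_2$ with all of $N_2$ sitting at $t=o_2$, then $\sum_{i\in N_2}d(x_i,o_2)=0$ while $\sum_{i\in N_2}d(x_i,s(m_2))=|N_2|\,d(t,s(m_2))>0$, so the $F_2$-side ratio alone is unbounded and the whole overflow must be absorbed by slack on the $N_1$ side; yet your step (i) as stated (bounding $\sum_{i\in N_1}d(x_i,t)$ by $3\sum_{i\in N_1}d(x_i,o_1)$) leaves no slack in the tight configuration, so the two steps cannot be carried out independently --- you must quantify how far the $N_1$ side is from saturating its factor $3$ as a function of $|S_1|/|N_1|$ and trade that off against $|S_2|/|N_2|$. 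The paper does this by first normalizing to a canonical worst-case instance (moving the $S_1$-agents to $\frac{o_1+w_1}{2}$, the rest of $N_1$ to $o_1$, the $S_2$-agents to $o_2$, the rest of $N_2$ to $\frac{o_2+w_2}{2}$, checking that neither the mechanism's output nor the direction of the ratio changes), obtaining the explicit expression $1+2\cdot\frac{(|N_1|-|S_1|)(w_1-o_1)+|S_2|(w_2-o_2)}{|S_1|(w_1-o_1)+(|N_2|-|S_2|)(w_2-o_2)}$, and then exploiting monotonicity in $|S_1|$ and $|S_2|$ together with the geometric fact $w_2-o_2\leq w_1-o_1$ (from $s(m_2)=w_2$ and $o_1\neq t(m_2)$). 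None of this is present in your write-up.

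A secondary but real error: you assert that $F_1$ having priority implies $|N_1|\geq|N_2|$. It does not --- if $|S_1|/|N_1|>|S_2|/|N_2|$ strictly, $F_1$ wins priority regardless of the relative sizes of $N_1$ and $N_2$. The inequality $|N_1|\geq|N_2|$ is available only via the tie-breaking convention, i.e., only when the two ratios are equal; the paper invokes it precisely at the extreme point $|S_1|=|N_1|$, $|S_2|=|N_2|$ where the ratios are both $1$, which is why its use is legitimate there. Since your step (ii) lists $|N_1|\geq|N_2|$ among its ingredients without restricting to that regime, the plan as stated rests on a premise that fails in general.
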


\begin{proof}
Let $\opt = (o_1,o_2)$ be an optimal solution; without loss of generality, we can assume that $w_1 < w_2$ and $o_1 < o_2$. 
We consider the following two cases:

\medskip
\noindent
{\bf Case 1: $t(m_1) \neq t(m_2)$.}
Then, we have that $w_1 = t(m_1)$ and $w_2 = t(m_2)$. By the properties of the median, for any $j \in [2]$, we have that 
$$\sum_{i \in N_j} d(i,m_j) \leq \sum_{i \in N_j} d(i,x)$$ 
for any point $x$ of the line, including $o_j$. Also, by the definition of $t(m_j)$, we have that
$d(m_j,t(m_j)) \leq d(m_j,x)$ for any candidate location $x$, again including $o_j$. 
Therefore, using these facts and the triangle inequality, we obtain
\begin{align*}
\SC(\bw) &= \sum_{j \in [2]} \sum_{i \in N_j} d(i,t(m_j)) \\
&\leq \sum_{j \in [2]} \sum_{i \in N_j} d(i,m_j) + \sum_{j \in [2]} \sum_{i \in N_j} d(m_j,t(m_j)) \\
&\leq \sum_{j \in [2]} \sum_{i \in N_j} d(i,m_j) +  \sum_{j \in [2]} \sum_{i \in N_j} d(m_j,o_j)\\
&\leq 2 \cdot \sum_{j \in [2]} \sum_{i \in N_j} d(i,m_j) +  \sum_{j \in [2]} \sum_{i \in N_j} d(i,o_j) \\
&\leq 3 \cdot \sum_{j \in [2]} \sum_{i \in N_j} d(i,o_j) \\
&= 3 \cdot \SC(\opt). 
\end{align*}

\medskip
\noindent
{\bf Case 2: $t(m_1) = t(m_2)$.} 
We can without loss of generality focus on the case where $2|S_1|-|N_1| \geq 2|S_2|-|N_2|$; the case where the inequality is the other way around can be handled using similar arguments. 
So, $w_1=t(m_1)$ and $w_2 = s(m_2)$. Note that $|S_1| \geq |N_1|/2$ and $|S_2| \geq |N_2|/2$. 
If $m_2$ is closer to $s(m_2)$ than to $o_2$, then we can repeat the arguments of Case 1 to obtain an upper bound of $3$. So, it suffices to focus on the case where $m_2$ is closer to $o_2$ than to $s(m_2)$, which means that $o_2 = t(m_2)$, and thus $o_1 < w_1=o_2 < w_2$. 
Now, observe the following:
\begin{itemize}
    
    \item Since $m_1$ is closer to $w_1$ than to $o_1$, we can move the agents in $S_1$ at $\frac{o_1+w_1}{2}$ and the remaining $|N_1|-|S_1|$ agents at $o_1$. Doing this, the approximation ratio cannot decrease (as we move towards $o_1$ and either towards $w_1$ at the same rate or away from $w_1$), $m_1$ remains the median agent of $N_1$ since $|S_1| \geq |N_1|/2$, and it is still true that $t(m_1) = w_1$. 

    \item We have that $m_2$ is closer to $o_2$ than to $w_2$.
    If $m_2 \leq o_2$, then we can move the agents of $N_2$ as follows: each agent in $S_2$ is moved at $o_2$ and the remaining $|N_2|-|S_2|$ agents of $N_2$ (who are closer to $w_2$ than to $o_2$) at $\frac{o_2+w_2}{2}$. 
    Doing this, the approximation ratio cannot decrease, $m_2$ remains the median agent of $N_2$ as $|S_2|\geq |N_2|/2$, and clearly, it is still true that $s(m_2)=w_2$. 
    
    If $m_2 > o_2$, then we can move the agents of $N_2$ as follows: 
    $|N_2|/2$ agents at $o_2$,  
    $|S_2|-|N_2|/2$ agents at $m_2$, and 
    the remaining $|N_2|-|S_2|$ agents (who are closer to $w_2$ than to $o_2$) at $\frac{o_2+w_2}{2}$.
    Doing this, the approximation ratio cannot decrease, $m_2$ remains the median agent of $N_2$ as $|S_2|\geq |N_2|/2$, and clearly, it is still true that $s(m_2)=w_2$. 

    It is not hard to observe that the first case ($m_2 \leq o_2$) is worse in terms of approximation ratio than the second case ($m_2 > o_2$) as more agents are exactly at their optimal location. So, it suffices to consider this one. 
\end{itemize}
Based on the above, in the worst case, we have
\begin{align*}
\SC(\bw) 
&= (|N_1|-|S_1|)(w_1-o_1) + |S_1|\frac{w_1-o_1}{2} + |S_2|(w_2-o_2) + (|N_2|-|S_2|)\frac{w_2-o_2}{2}.
\end{align*}
and 
\begin{align*}
\SC(\opt) = |S_1| \frac{w_1-o_1}{2} + (|N_2|-|S_2|) \frac{w_2-o_2}{2}. 
\end{align*}
Hence, 
\begin{align*}
\frac{\SC(\bw)}{\SC(\opt)}
&= 1 + 2\cdot \frac{(|N_1|-|S_1|)(w_1-o_1) + |S_2|(w_2-o_2)}{|S_1| (w_1-o_1) + (|N_2|-|S_2|) (w_2-o_2)}\\
&\leq 1 + 2\cdot \frac{|N_1|-|S_1|+ |S_2|}{|S_1| + |N_2|-|S_2|}\\
&\leq 3,
\end{align*}
where the last inequality holds as $2|S_1|-|N_1|\geq 2|S_2|-|N_2|$ and the first inequality follows since $|S_1|\geq |N_1|/2$, $|S_2|\geq |N_2|/2$, and $w_2 - o_2 \leq o_2 - o_1 = w_1 - o_1$; the last is true as $s(m_2)=w_2$ and thus $m_2$, who is located at $o_2=w_1$ in this worst-case instance, is closer to $w_2$ than to $o_1$. 
\end{proof}


\subsection{General instances} \label{sec:sc-general}
To tackle the general case, we consider the following mechanism.
Let $j^* = \arg\max_{j \in [2]} |N_j\setminus N_{3-j}|$.
\begin{itemize}
    \item If $|N_1 \cap N_2| \geq |N_{j^*}\setminus N_{3-j^*}|$, then run the {\sc Median} mechanism with input the agents of $N_1 \cap N_2$ (ignoring all other agents).
    \item Otherwise, choose $w_{j^*}$ to be the candidate location closest to the median $m_{j^*}$ of $N_{j^*}\setminus N_{3-j^*}$ (we slightly abuse notation here as $m_{j^*}$ would normally be the median of $N_{j^*}$), and $w_{3-j^*}$ to be the available candidate location closest to the median $m_{3-j^*}$ of $N_{3-j^*}$; we refer to this mechanism as {\sc Alternate-Median}.
\end{itemize}
Note that {\sc Median} was shown to be strategyproof in Section \ref{sec:sc-doubleton}. As for {\sc Alternate-Median}, it is strategyproof since agents in $N_{j^*}\setminus N_{3-j^*}$ have no incentive to misreport and affect the choice of $m_{j^*}$ and $w_{j^*}$, while agents in $N_{3-j^*}$ cannot affect the choice of $w_{j^*}$ and have no incentive to misreport and affect the choice of $m_{3-j^*}$ and $w_{3-j^*}$; any misreport can only push the median, and the corresponding nearest location, farther away. Since the two cases are independent (the cardinalities of the sets of agents with different approval preferences are known), the mechanism combining {\sc Median} and {\sc Alternate-Median} is strategyproof. 

We will bound the approximation ratio of the mechanism with the following two theorems which bound the approximation ratio of the mechanism in the two cases. Without loss of generality, to simplify our notation, let $j^*=1$. 

\begin{theorem} \label{thm:sc-general-case1}
For general instances with $|N_1 \cap N_2| \geq |N_1\setminus N_2|$, 
the approximation ratio of {\sc Median} is at most $7$. 
\end{theorem}

\begin{proof}
By Theorem~\ref{thm:sc-median-3}, we have that
\begin{align*}
\sum_{i \in N_1 \cap N_2} \sum_{j \in [2]} d(i,w_j) \leq 3 \cdot  \sum_{i \in N_1 \cap N_2} \sum_{j \in [2]} d(i,o_j). 
\end{align*}
For the agents in $N_1 \setminus N_2$, by the triangle inequality and since $|N_1 \setminus N_2| \leq |N_1 \cap N_2|$, we have
\begin{align*}
\sum_{i \in N_1 \setminus N_2} d(i,w_1) 
&\leq  \sum_{i \in N_1 \setminus N_2} d(i,o_1) + |N_1 \setminus N_2| \cdot d(w_1,o_1) \\
&\leq \sum_{i \in N_1 \setminus N_2} d(i,o_1) + \sum_{i \in N_1 \cap N_2} d(w_1,o_1) \\
&\leq \sum_{i \in N_1 \setminus N_2} d(i,o_1) +  \sum_{i \in N_1 \cap N_2} \bigg(d(i,w_1) + d(i,o_1)\bigg). 
\end{align*}
Similarly, for the agents in $N_2 \setminus N_1$, since $|N_2 \setminus N_1| \leq |N_1 \setminus N_2| \leq |N_1 \cap N_2|$, we have
\begin{align*}
\sum_{i \in N_2 \setminus N_1} d(i,w_2) 
&\leq \sum_{i \in N_2 \setminus N_1} d(i,o_2) +  \sum_{i \in N_1 \cap N_2} \bigg(d(i,w_2) + d(i,o_2)\bigg). 
\end{align*}
By combining these, we have
\begin{align*}
\SC(\bw) 
&\leq 3 \cdot \SC(\opt) +  \sum_{i \in N_1 \cap N_2} \sum_{j \in [2]} \bigg( d(i,w_j) +  d(i,o_j) \bigg) \\
&\leq 3 \cdot \SC(\opt) +  4\cdot \sum_{i \in N_1 \cap N_2} \sum_{j \in [2]} d(i,o_j) \\
&\leq 7 \cdot \SC(\opt). 
\end{align*}
Therefore, the approximation ratio is at most $7$.
\end{proof}

\begin{theorem} \label{thm:sc-general-case2}
For general instances with $|N_1 \cap N_2| \leq |N_1\setminus N_2|$, the approximation ratio of {\sc Alternate-Median} is at most $7$. 
\end{theorem}

\begin{proof}
We consider the following cases:

\medskip
\noindent 
{\bf Case 1: $t(m_1) \neq t(m_2)$.}
Then, we have that $w_1 = t(m_1)$ and $w_2 = t(m_2)$. 
By the properties of the median, we have that 
$$\sum_{i \in N_1\setminus N_2} d(i,m_1) \leq \sum_{i \in N_1 \setminus N_2} d(i,x)$$
and 
$$\sum_{i \in N_2} d(i,m_2) \leq \sum_{i \in N_2} d(i,x)$$ 
for any point $x$ of the line, including $o_1$ and $o_2$. 
Also, by the definition of $t(m_j)$ for $j \in [2]$, we have that
$d(m_j,w_j) \leq d(m_j,x)$ for any candidate location $x$, including $o_j$. 
Therefore, using these facts and the triangle inequality, we bound the contribution of the different types of agents to the social cost of $\bw$. In particular, for the agents of $N_1 \setminus N_2$, we have
\begin{align*}
\sum_{i \in N_1 \setminus N_2} d(i,w_1) 
&\leq \sum_{i \in N_1 \setminus N_2} \bigg( d(i,m_1) + d(m_1,w_1) \bigg) \\
&\leq \sum_{i \in N_1 \setminus N_2} \bigg( d(i,m_1) + d(m_1,o_1) \bigg) \\
&\leq \sum_{i \in N_1 \setminus N_2} \bigg( 2\cdot d(i,m_1) + d(i,o_1) \bigg) \\
&\leq 3\cdot \sum_{i \in N_1 \setminus N_2} d(i,o_1).
\end{align*}
Similarly, for the agents of $N_2$, we have
\begin{align*}
\sum_{i \in N_2} d(i,w_2) 
\leq 3\cdot \sum_{i \in N_2} d(i,o_2).
\end{align*}
For the agents of $N_1 \cap N_2$ in terms of $w_1$, using the triangle inequality, we obtain
\begin{align*}
\sum_{i \in N_1 \cap N_2} d(i,w_1) 
&\leq \sum_{i \in N_1 \cap N_2} d(i,o_1) + |N_1 \cap N_2|\cdot d(w_1,o_1) \\
&\leq \sum_{i \in N_1 \cap N_2} d(i,o_1) + \sum_{i \in N_1 \setminus N_2} d(w_1,o_1) \\
&= \sum_{i \in N_1 \cap N_2} d(i,o_1) + \sum_{i \in N_1 \setminus N_2} \bigg( d(i,w_1) + d(i,o_1) \bigg) \\
&\leq \sum_{i \in N_1 \cap N_2} d(i,o_1) + 4\cdot \sum_{i \in N_1 \setminus N_2} d(i,o_1).
\end{align*}
By putting everything together, we obtain an upper bound of $7$.

\medskip
\noindent
{\bf Case 2: $t(m_1) = t(m_2)$.} 
In this case, we have that $w_1 = t(m_1) = t(m_2)$ and $w_2 = s(m_2)$. Clearly, if $d(m_2,w_2) \leq d(m_2,o_2)$, we get an upper bound of $7$, similarly to Case 1. So, we can assume that $d(m_2,w_2) > d(m_2,o_2)$, which combined with the fact that $w_2 = s(m_2)$, implies that $o_2 = t(m_2) = w_1$. 
For the agents in $N_1 \setminus N_2$, since $w_1 = t(m_1)$, we have a $3$-approximation guarantee (using the same arguments as above):
\begin{align*}
\sum_{i \in N_1 \setminus N_2} d(i,w_1) \leq 3 \cdot \sum_{i \in N_1 \setminus N_2} d(i,o_1).
\end{align*}
For the agents in $N_1 \cap N_2$ in terms of $w_1$, similarly to Case 1, we have
\begin{align*}
\sum_{i \in N_1 \cap N_2} d(i,w_1) \leq \sum_{i \in N_1 \cap N_2} \bigg( d(i,o_1) + d(o_1,w_1) \bigg) = \sum_{i \in N_1 \cap N_2} d(i,o_1)  + |N_1 \cap N_2| \cdot d(o_1,w_1).
\end{align*}
For the agents in $N_2 = (N_2\setminus N_1) \cup (N_1 \cap N_2)$ in terms of $w_2$, since $d(m_2,w_2) \leq d(m_2,o_1)$, $w_1 = o_2$, and $m_2$ minimizes the total distance of the agents in $N_2$ from any other point of the line, by the triangle inequality, we have
\begin{align*}
\sum_{i \in N_2} d(i,w_2) 
&\leq \sum_{i \in N_2} \bigg( d(i,m_2) + d(m_2,w_2) \bigg) \\
&\leq \sum_{i \in N_2} \bigg( d(i,m_2) + d(m_2,o_1) \bigg) \\
&\leq \sum_{i \in N_2} \bigg( d(i,m_2) + d(m_2,o_2) + d(o_1,o_2) \bigg) \\
&\leq \sum_{i \in N_2} \bigg( 2d(i,m_2) + d(i,o_2) + d(o_1,w_1) \bigg) \\
&\leq 3 \cdot \sum_{i \in N_2} d(i,o_2)  + |N_2| \cdot d(o_1,w_1).
\end{align*}
So, by putting everything together and using the fact that $|N_2| = |N_2 \setminus N_1| + |N_2 \cap N_1|$, we have
\begin{align*}
\SC(\bw) 
&\leq 3 \cdot \sum_{i \in N_1 \setminus N_2} d(i,o_1) + 3 \cdot \sum_{i \in N_2} d(i,o_2) + \sum_{i \in N_1 \cap N_2} d(i,o_1) \\
&\quad + |N_1 \cap N_2| \cdot d(o_1,w_1) + |N_2| \cdot d(o_1,w_1) \\
&\leq 3 \cdot \SC(\opt) + \bigg( |N_2 \setminus N_1| + 2\cdot |N_1 \cap N_2| \bigg) \cdot d(o_1,w_1).
\end{align*}
Since $w_1 = t(m_1)$, half of the agents in $N_1 \setminus N_2$ suffer a cost of at least $d(o_1,w_1)/2$ in the optimal solution. Also, all the agents of $N_1 \cap N_2$ suffer a cost of at least $d(o_1,o_2)/2 = d(o_1,w_1)/2$, and thus 
\begin{align*}
    \SC(\opt) \geq \bigg( \frac{|N_1 \setminus N_2|}{4} + \frac{|N_1 \cap N_2|}{2} \bigg) d(o_1,w_1).
\end{align*}
Hence, since $|N_2 \setminus N_1| \leq |N_1 \setminus N_2|$, the approximation ratio is at most
\begin{align*}
 3 + 4\cdot \frac{|N_2 \setminus N_1| + 2|N_1 \cap N_2|}{|N_1 \setminus N_2| + 2|N_1 \cap N_2|} \leq 7. 
\end{align*}
Consequently, the approximation ratio is overall at most $7$. 
\end{proof}

Using Theorem~\ref{thm:sc-general-case1} and Theorem~\ref{thm:sc-general-case2}, we obtain the following result. 

\begin{corollary}
For general instances, there is a strategyproof mechanism with approximation ratio at most $7$. 
\end{corollary}


\section{Max cost} \label{sec:max}
In this section, we turn our attention to the max cost objective for which we show that the best possible approximation ratio of strategyproof mechanisms is between $2$ and $3$ for doubleton instances, and exactly $3$ for singleton and general preferences. 

\subsection{Doubleton instances} \label{sec:max-doubleton}
For the upper bound, we consider a simple mechanism that places both facilities at the candidate locations that are closest to the leftmost agent $\ell$. We refer to this mechanism as {\sc Leftmost}; see Mechanism~\ref{mech:max-leftmost-rightmost}. It is not hard to show that this mechanism is strategyproof and that it achieves an approximation ratio of $3$. 

\SetCommentSty{mycommfont}
\begin{algorithm}[h]
\SetNoFillComment
\caption{{\sc Leftmost}}
\label{mech:max-leftmost-rightmost}
{\bf Input:} Reported positions of agents\;
{\bf Output:} Facility locations $\bw = (w_1,w_2)$ \;
$\ell \gets$ leftmost agent in $N_1 \cap N_2$\;
$w_1 \gets t(\ell)$\;
$w_2 \gets s(\ell)$\;
\end{algorithm}

\begin{theorem} \label{thm:max-all-both-upper-2}
For doubleton instances, {\sc Leftmost} is strategyproof and achieves an approximation ratio of at most $3$.
\end{theorem}

\begin{proof}
For the strategyproofness of the mechanism, consider any agent $i$; recall that $i$ approves both facilities. To affect the outcome, agent $i$ would have report a position that lies at the left of $\ell$. However, changing the leftmost agent position can only lead to placing the facilities at locations farther away from $i$, and hence $i$ has no incentive to misreport. 

For the approximation ratio, let $\opt = (o_1,o_2)$ be an optimal solution. Clearly, there exist $x \in \{o_1,o_2\}$ and $y \in \{o_1,o_2\}\setminus \{x\}$ such that $d(\ell,w_1) \leq d(\ell,x)$ and $d(r,w_2) \leq d(\ell,y)$. Let $i$ be the (rightmost) agent who determines the max cost of the mechanism. Using the triangle inequality, we have
\begin{align*}
\MC(\bw)=d(i,w_1) + d(i,w_2) 
&\leq 
\bigg(d(i,x) + d(\ell,x) + d(\ell,w_1) \bigg) + \bigg( d(i,y) + d(\ell,y) + d(\ell,w_2) \bigg) \\
&\leq 3 \max_{j \in N} \bigg(d(j,x) + d(j,y)\bigg) = 3 \cdot \MC(\opt).
\end{align*}
Therefore, the approximation ratio is at most $3$. 
\end{proof}

We next show a slightly weaker lower bound of $2$ on the approximation ratio of any strategyproof mechanism. 

\begin{theorem} \label{max:both-lower-2}
For doubleton instances, the approximation ratio of any deterministic strategyproof mechanism is at least $2-\delta$, for any $\delta > 0$. 
\end{theorem}

\begin{proof}
Consider the following instance $I_1$: 
There are three candidate locations at $-1$, $0$, and $1$ and two agents (that approve both facilities) positioned at $-\varepsilon$ and $\varepsilon$, respectively, for some infinitesimal $\varepsilon > 0$. Since there are two facilities to be located, at least one of them must be placed at $-1$ or $1$; see Figure~\ref{fig:max-doubleton--lower-a}. Without loss of generality, let us assume that a facility is placed at $1$. 

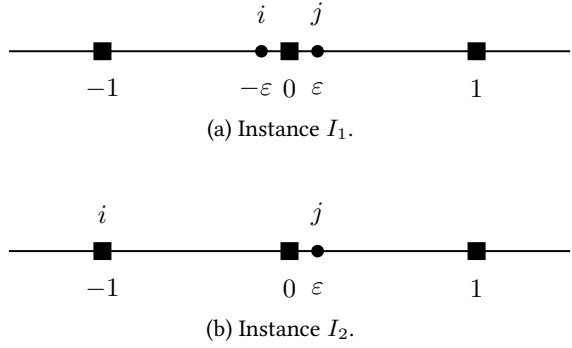
\begin{figure}[t]
\tikzset{every picture/.style={line width=0.75pt}} 
\centering
\begin{subfigure}[t]{0.45\linewidth}
\centering
\begin{tikzpicture}[x=0.7pt,y=0.7pt,yscale=-1,xscale=1]
\draw [line width=0.75]  (0,0) -- (300,0) ;
\filldraw ([xshift=-3pt,yshift=-3pt]50,0) rectangle ++(6pt,6pt);
\filldraw ([xshift=-3pt,yshift=-3pt]250,0) rectangle ++(6pt,6pt);
\filldraw ([xshift=-3pt,yshift=-3pt]150,0) rectangle ++(6pt,6pt);
\filldraw (165,0) circle (2pt);
\filldraw (135,0) circle (2pt);

\draw (135,-20) node [inner sep=0.75pt]  [font=\small]  {$i$};
\draw (165,-20) node [inner sep=0.75pt]  [font=\small]  {$j$};

\draw (50,20) node [inner sep=0.75pt]  [font=\small]  {$-1$};
\draw (250,20) node [inner sep=0.75pt]  [font=\small]  {$1$};
\draw (150,20) node [inner sep=0.75pt]  [font=\small]  {$0$};
\draw (165,20) node [inner sep=0.75pt]  [font=\small]  {$\varepsilon$};
\draw (132,20) node [inner sep=0.75pt]  [font=\small]  {$-\varepsilon$};
\end{tikzpicture}
\caption{Instance $I_1$.}
\label{fig:max-doubleton--lower-a}
\end{subfigure}
\\[20pt]
\begin{subfigure}[t]{0.45\linewidth}
\centering
\begin{tikzpicture}[x=0.7pt,y=0.7pt,yscale=-1,xscale=1]
\draw [line width=0.75]  (0,0) -- (300,0) ;
\filldraw ([xshift=-3pt,yshift=-3pt]50,0) rectangle ++(6pt,6pt);
\filldraw ([xshift=-3pt,yshift=-3pt]250,0) rectangle ++(6pt,6pt);
\filldraw ([xshift=-3pt,yshift=-3pt]150,0) rectangle ++(6pt,6pt);
\filldraw (165,0) circle (2pt);

\draw (50,-20) node [inner sep=0.75pt]  [font=\small]  {$i$};
\draw (165,-20) node [inner sep=0.75pt]  [font=\small]  {$j$};

\draw (50,20) node [inner sep=0.75pt]  [font=\small]  {$-1$};
\draw (250,20) node [inner sep=0.75pt]  [font=\small]  {$1$};
\draw (150,20) node [inner sep=0.75pt]  [font=\small]  {$0$};
\draw (165,20) node [inner sep=0.75pt]  [font=\small]  {$\varepsilon$};
\end{tikzpicture}
\caption{Instance $I_2$.}
\label{fig:max-doubleton--lower-b}
\end{subfigure}
\caption{The two instances used in the proof of the lower bound of $2$ in terms of the max cost for doubleton instances (Theorem~\ref{max:both-lower-2}). Both agents $i$ and $j$ approve both facilities. Rectangles represent candidate locations.}
\label{fig:max-doubleton--lower}
\end{figure}

Now, consider the instance $I_2$, which is the same as $I_1$ with the only difference that the agent at $-\varepsilon$ has been moved to $-1$; see Figure~\ref{fig:max-doubleton--lower-b}. To maintain strategyproofness, a facility must be placed at $1$ in $I_2$ as well; otherwise, the agent at $-\varepsilon$ in $I_1$ would misreport her location as $-1$ to affect the outcome and decrease her cost. So, in $I_2$, any strategyproof mechanism either places one facility at $-1$ and one facility at $1$, for a max cost of $2$, or one facility at $0$ and one facility at $1$, for a max cost of $3$. However, placing one facility at $-1$ and one facility at $0$ leads to max cost $1+\varepsilon$, and thus an approximation ratio of at least $2-\delta$, for any $\delta > 0$.
\end{proof}


\subsection{Singleton instances} \label{sec:max-singleton}
As argued at the beginning of Section~\ref{sec:sc-singleton}, instances in which all agents approve one of the facilities are equivalent to having just this one facility to place. Consequently, by the work of \citet{Tang2020candidate}, we cannot hope to achieve an approximation ratio better than $3$ for singleton instances. For completeness, we include a simple proof of this lower bound here. 

\begin{theorem} \label{max:singleton-lower-3}
For singleton instances, the approximation ratio of any strategyproof mechanism is at least $3-\delta$, for any $\delta > 0$.     
\end{theorem}

\begin{proof}
Consider the following instance $I_1$: 
There are two candidate locations at $-1$ and $1$ and two agents approving only $F_1$ positioned at $-\varepsilon$ and $\varepsilon$, respectively, for some infinitesimal $\varepsilon > 0$; see Figure~\ref{fig:max-singleton-lower-a}. Without loss of generality, we can assume that $F_1$ is placed at $1$ and $F_2$ at $-1$. 

\begin{figure}[t]
\tikzset{every picture/.style={line width=0.75pt}} 
\centering
\begin{subfigure}[t]{\linewidth}
\centering
\begin{tikzpicture}[x=0.7pt,y=0.7pt,yscale=-1,xscale=1]
\draw [line width=0.75]  (0,0) -- (400,0);
\filldraw ([xshift=-3pt,yshift=-3pt]200,0) rectangle ++(6pt,6pt);
\filldraw ([xshift=-3pt,yshift=-3pt]350,0) rectangle ++(6pt,6pt);
\filldraw (260,0) circle (2pt);
\filldraw (290,0) circle (2pt);

\draw (260,-20) node [inner sep=0.75pt]  [font=\small]  {$i$};
\draw (290,-20) node [inner sep=0.75pt]  [font=\small]  {$j$};

\draw (200,20) node [inner sep=0.75pt]  [font=\small]  {$-1$};
\draw (350,20) node [inner sep=0.75pt]  [font=\small]  {$1$};
\draw (290,20) node [inner sep=0.75pt]  [font=\small]  {$\varepsilon$};
\draw (260,20) node [inner sep=0.75pt]  [font=\small]  {$-\varepsilon$};
\end{tikzpicture}
\caption{Instance $I_1$.}
\label{fig:max-singleton-lower-a}
\end{subfigure}
\\[20pt]
\begin{subfigure}[t]{\linewidth}
\centering
\begin{tikzpicture}[x=0.7pt,y=0.7pt,yscale=-1,xscale=1]
\draw [line width=0.75]  (0,0) -- (400,0) ;
\filldraw ([xshift=-3pt,yshift=-3pt]200,0) rectangle ++(6pt,6pt);
\filldraw ([xshift=-3pt,yshift=-3pt]350,0) rectangle ++(6pt,6pt);
\filldraw (50,0) circle (2pt);
\filldraw (290,0) circle (2pt);

\draw (50,-20) node [inner sep=0.75pt]  [font=\small]  {$i$};
\draw (290,-20) node [inner sep=0.75pt]  [font=\small]  {$j$};

\draw (50,20) node [inner sep=0.75pt]  [font=\small]  {$-2$};
\draw (200,20) node [inner sep=0.75pt]  [font=\small]  {$-1$};
\draw (350,20) node [inner sep=0.75pt]  [font=\small]  {$1$};
\draw (290,20) node [inner sep=0.75pt]  [font=\small]  {$\varepsilon$};
\end{tikzpicture}
\caption{Instance $I_2$.}
\label{fig:max-singleton-lower-b}
\end{subfigure}
\caption{The two instances used in the proof of the lower bound of $3$ in terms of the max cost for singleton instances (Theorem~\ref{max:singleton-lower-3}). 
Both agents $i$ and $j$ approve facility $F_1$. Rectangles represent candidate locations.}
\label{fig:max-singleton-lower}
\end{figure}
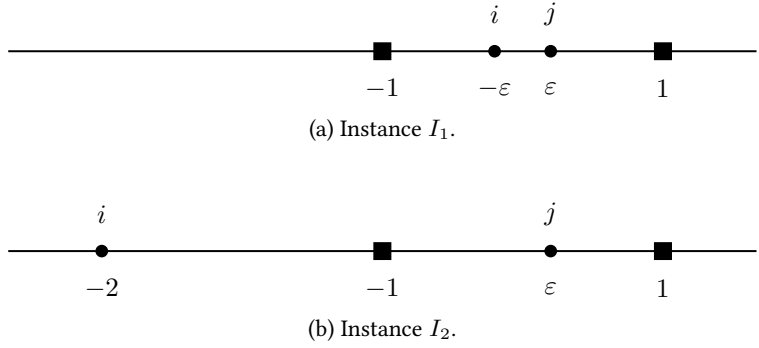

Now, consider the instance $I_2$, which is the same as $I_1$ with the only difference that the agent at $-\varepsilon$ has been moved to $-2$; see Figure~\ref{fig:max-singleton-lower-b}. To maintain strategyproofness, $F_1$ must be placed at $1$ in $I_2$ as well; otherwise, the agent at $-\varepsilon$ in $I_1$ would misreport her position as $-2$ to decrease her cost from $1+\varepsilon$ to $1-\varepsilon$. This leads to a max cost of $3$, when a max cost of $1+\varepsilon$ is possible by placing $F_1$ at $-1$. Therefore, the approximation ratio is at least $3-\delta$, for any $\delta > 0$.
\end{proof}

A first idea towards an upper bound could be to place $F_1$ at the closest candidate location to the leftmost agent $\ell_1$ of $N_1$, and $F_2$ at the closest available candidate location to an agent of $N_2$, such as the leftmost agent $\ell_2$ or the rightmost agent $r_2$. While these  mechanisms are clearly strategyproof, it is not hard to observe that they cannot achieve a good enough approximation ratio. 

\begin{example}
If we place $F_1$ at $t(\ell_1)$ and $F_2$ at $t(r_2)$ or $s(r_2)$ depending on availability, then consider the following instance: 
There are three candidate locations at $0$, $2$, and $6$. 
There is an agent $\ell_1$ that approves $F_1$ at $1+\varepsilon$, an agent $\ell_2$ that approves $F_2$ at $1$, and another agent $r_2$ that approves $F_2$ at $3+\varepsilon$, for some infinitesimal $\varepsilon>0$. So, we place $F_1$ at $2$ and $F_2$ at $6$ for a max cost of $5$ (determined by $\ell_2$). On the other hand, we could place $F_1$ at $0$ and $F_2$ at $2$ for a max cost of approximately $1$, leading to an approximation ratio of $5$. Clearly, if we chose $\ell_2$ instead of $r_2$ to determine the location of $F_2$, there is a symmetric instance leading again to the same lower bound. 
\end{example}

The above example illustrates that it is not always a good idea to choose a priori $\ell_2$ or $r_2$ to determine where to place $F_2$, especially when the closest candidate location to them might not be available after placing $F_1$. Instead, we need to carefully decide whether $\ell_2$ or $r_2$ or neither of them is the best one to choose where to place $F_2$. We make this decision as follows: We ``ask'' $\ell_2$ and $r_2$ to ``vote'' over two candidate locations; the candidate location $L$ that is the closest at the left of $t(\ell_1)$ (where $F_1$ is placed) and the candidate location $R$ that is the closest at the right of $t(\ell_1)$. If $\ell_2$ and $r_2$ ``agree'', then they are both on the same side of the midpoint of the interval defined by $L$ and $R$, and thus depending on whether they are on the left side (agree on $L$) or the right side (agree on $R$), we allow $r_2$ or $\ell_2$, respectively, to make the choice of where to place $F_2$. If they ``disagree'', they are on different sides of the interval's midpoint, so neither $\ell_2$ nor $r_2$ should make a choice of where to place $F_2$; in this case, the closest of $L$ and $R$ to $t(\ell_1)$ is a good candidate location to place $F_2$. 
This idea is formalized in Mechanism~\ref{mech:maxV}, which we call $\maxV$. 

\SetCommentSty{mycommfont}
\begin{algorithm}[h]
\SetNoFillComment
\caption{$\maxV$}
\label{mech:maxV}
{\bf Input:} Reported positions of agents with singleton preferences\;
{\bf Output:} Facility locations $\bw = (w_1,w_2)$\;
$\ell_1 \gets$ leftmost agent in $N_1$\;
$\ell_2 \gets$ leftmost agent in $N_2$\;
$r_2 \gets$ rightmost agent in $N_2$\; 
$w_1 \gets t(\ell_1)$\;
$L \gets$ closest candidate location at the left of $w_1$\;
$R \gets$ closest candidate location at the right of $w_1$\;
\tcp*[h]{{\bf (case 1)} $\ell_2$ and $r_2$ agree that $L$ is closer, so $r_2$ gets to choose} \\
\uIf{$\ell_2$ and $r_2$ are both closer to $L$ than to $R$}{ 
    \uIf{$t(r_2)$ is available}{ 
       $w_2 \gets t(r_2)$\;
    }
    \Else{
       $w_2 \gets s(r_2)$\;
    }
}
\tcp*[h]{{\bf (case 2)} $\ell_2$ and $r_2$ agree that $R$ is closer, so $\ell_2$ gets to choose} \\
\uElseIf{$\ell_2$ and $r_2$ are both closer to $R$ than to $L$}{ 
    \uIf{$t(\ell_2)$ is available}{ 
       $w_2 \gets t(\ell_2)$\;
    }
    \Else{
       $w_2 \gets s(\ell_2)$\;
    }
}
\tcp*[h]{{\bf (case 3)} $\ell_2$ and $r_2$ disagree, so choose the closest of $L$ and $R$ to $w_1$} \\
\Else{
    $w_2 \gets \arg\min_{x \in \{L,R\}}\{|w_1-x|\}$\;
}
\end{algorithm}

We first show that $\maxV$ is strategyproof. 

\begin{theorem}
For singleton instances, $\maxV$ is strategyproof. 
\end{theorem}

\begin{proof}
Observe that no agent in $N_1$ has incentive to misreport as facility $F_1$ is located at the closest candidate location to $\ell_1$; indeed, $\ell_1$ is content while no agent would like to misreport to become the leftmost agent of $N_1$ as then $F_1$ will either remain at the same location or could be moved farther away. For the agents of $N_2$, we consider each case separately depending on which is the true profile.
Denote by $w_2$ the location of $F_2$ when the agents report their positions truthfully.

\medskip
\noindent 
{\bf (Case 1)}
Clearly, $r_2$ has no incentive to deviate. Consider an agent $i \in N_2$, other than $r_2$, that deviates and misreports a position $z$. 
\begin{itemize}
\item
If $z \leq r_2$, then the location $F_2$ is still $w_2$. 

\item 
If $z  \in \left(r_2, \frac{L+R}{2}\right]$, agent $i$ becomes the rightmost agent but we are still in Case 1. So, the location of $F_2$ becomes the closest available location to $z$, which is either $w_2$ or some candidate location at the right of $w_2$. This means that the cost of $i$ either remains the same or increases, and thus $i$ has no incentive to misreport such a position.

\item 
If $z > \frac{L+R}{2}$, agent $i$ becomes the rightmost agent and the location of $F_2$ is determined by Case 3, i.e., becomes $y = \arg\min_{x \in \{L,R\}}\{|w_1-x|\}$. Since the true rightmost agent $r_2$ is closer to $L$ than to $R$, it holds that $w_2 \leq L \leq y$. This again means that the cost of $i$ either remains the same or increases, and thus $i$ has no incentive to misreport such a position.
\end{itemize}

\medskip
\noindent 
{\bf (Case 2)}
This is symmetric to Case 1. 

\medskip
\noindent 
{\bf (Case 3)} 
Observe that any deviation that still leads to Case 3 does not affect the outcome of the mechanism as $w_2 =  \arg\min_{x \in \{L,R\}}\{|w_1-x|\}$. Hence, no agent $i \in N_2 \setminus\{\ell_2,r_2\}$ can affect the outcome as any possible misreported position can either be at the left of $\ell_2$ or the right of $r_2$, which means that we are still in Case 3. Now, let us assume that $r_2$ misreports so that the location of $F_2$ is determined by Case 1. Since, in that case, all agents are closer to $L$ than to $R$, and there are no other available candidate locations in the interval $[L,R]$ (since $w_1$ is occupied by $F_1$), $F_2$ can only be placed at some location $y \leq L$, which is clearly not better for $r_2$. A symmetric argument for $\ell_2$ shows that again no agent can misreport. 
\end{proof}

Next, we show that $\maxV$ achieves an approximation ratio of at most $3$.

\begin{theorem} \label{thm:max-singleton-upper}
For singleton instances, the approximation ratio of $\maxV$ is at most $3$. 
\end{theorem}

\begin{proof}
If the max cost of the mechanism is due to an agent $i \in N_1$, the choice $w_1 = t(\ell_1)$ implies that $d(\ell_1,w_1)\leq d(\ell_1,o_1)$, and thus, by the triangle inequality, we have that
\begin{align*}
\MC(\bw) = d(i,w_1) 
    \leq d(i,o_1) + d(\ell_1,o_1) + d(\ell_1,w_1)
     \leq d(i,o_1) + 2\cdot d(\ell_1,o_1)
    \leq 3 \cdot \MC(\opt). 
\end{align*}
So, we now focus on the case where the max cost of the mechanism is determined by an agent in $N_2$ and we may assume that $w_2\neq o_2$ as otherwise the claim holds trivially. Due to the symmetry of Case 1 and Case 2, it suffices to bound the approximation ratio in Case 1 and in Case 3. In any of these cases, if there is an agent of $N_2$ that is closer to $w_2$ than to $o_2$, then, similarly to above, by applying the triangle inequality, we can again show that the approximation ratio is at most $3$. Thus, we will assume that all agents of $N_2$ are closer to $o_2$ than to $w_2$, which means that $w_1=o_2$. To see that, note that, in Case 1, $o_2$ has to be  unavailable  as it must hold $t(r_2)=o_2$, while in Case 3,  $w_2$ is the closest candidate location among $L$ and $R$ to $w_1$ (and thus $\ell_2$ cannot be at the left of $L$ if $L$ is chosen and $r_2$ cannot be at the right of $R$ if $R$ is chosen). Due to this, $o_1$ cannot be $w_1$ and we have the following two possibilities:
\begin{itemize}
    \item If $o_1 \leq L$, then $d(\ell_1,o_1) \geq d(\ell_1,L)$.
    \item If $o_1 \geq R$, then $d(\ell_1,o_1) \geq d(\ell_1,R)$.
\end{itemize}
\medskip
\noindent 
{\bf (Case 1)}
Since $t(r_2)=w_1=o_2$ and $w_2$ is the closest available candidate location to $r_2$, it has to be the case that $w_2=L$. Let $i \in \{\ell_2,r_2\}$ be the agent of $N_2$ that gives the max cost. 
\begin{itemize}
\item If $o_1 \leq L$, then due to the triangle inequality, and the facts that $o_2=w_1$ and $d(\ell_1,o_1) \geq d(\ell_1,L)$, we have
\begin{align*}
    \MC(\bw) 
    = d(i,L) 
    &\leq d(i,o_2) + d(\ell_1,o_2) + d(\ell_1,L) \\
    &= d(i,o_2) + d(\ell_1,w_1) + d(\ell_1,L) \\
    &\leq d(i,o_2) + d(\ell_1,o_1) + d(\ell_1,o_1) \\ 
    &\leq 3 \cdot \MC(\opt).
\end{align*}

\item If $o_1 \geq R$, then due to the triangle inequality, and the facts that $o_2=w_1$, $d(i,L) \leq d(i,R)$ and  $d(\ell_1,o_1) \geq d(\ell_1,R)$, we have
\begin{align*}
    \MC(\bw) 
    = d(i,L) 
    \leq d(i,R) 
    &\leq d(i,o_2) + d(\ell_1,o_2) + d(\ell_1,R) \\
    &= d(i,o_2) + d(\ell_1,w_1) + d(\ell_1,R) \\
    &\leq d(i,o_2) + d(\ell_1,o_1) + d(\ell_1,o_1) \\ 
    &\leq 3 \cdot \MC(\opt).
\end{align*}
\end{itemize}

\medskip
\noindent 
{\bf (Case 3)}
Without loss of generality, let us assume that $w_2 = R$; the case where $w_2 = L$ is symmetric. 
So, $d(R,w_1) = d(R,o_2) \leq d(L,o_2) = d(L,w_1)$. Since $r_2 \leq R$, the max cost of the mechanism is determined by agent $\ell_2$.
\begin{itemize}
    \item If $o_1 \geq R$, then due to the triangle inequality, and the facts that $w_1=o_2$ and $d(\ell_1,o_1) \geq d(\ell_1,R)$, we have
    \begin{align*}
        \MC(\bw) = d(\ell_2, R) 
        &\leq d(\ell_2,o_2) + d(\ell_1,o_2) + d(\ell_1,R) \\
        &= d(\ell_2,o_2) + d(\ell_1,w_1) + d(\ell_1,R) \\
        &\leq d(\ell_2,o_2) + d(\ell_1,o_1) + d(\ell_1,o_1) \\ 
        &\leq 3 \cdot \MC(\opt).
    \end{align*}

    \item If $o_1 \leq L$, then due to the triangle inequality, and the facts that $d(R,o_2) \leq d(L,o_2)$, $w_1=o_2$ and $d(\ell_1,o_1) \geq d(\ell_1,L)$, we have
    \begin{align*}
        \MC(\bw) = d(\ell_2, R) 
        &\leq d(\ell_2,o_2) + d(R,o_2) \\
        &\leq d(\ell_2,o_2) + d(L,o_2) \\
        &\leq d(\ell_2,o_2) + d(\ell_1,o_2) + d(\ell_1,L) \\
        &= d(\ell_2,o_2) + d(\ell_1,w_1) + d(\ell_1,L) \\
        &\leq d(\ell_2,o_2) + d(\ell_1,o_1) + d(\ell_1,o_1) \\ 
        &\leq 3 \cdot \MC(\opt).
    \end{align*}
\end{itemize}
This completes the proof. 
\end{proof}


\subsection{General instances} \label{sec:max-general}
To tackle the general case, we consider a mechanism that runs {\sc Leftmost} in case the instance consists of at least one agent with doubleton preference, and $\maxV$ in case the instance is singleton. 
It is not hard to observe that {\sc Leftmost} is strategyproof even when there are agents with singleton preference; its decision is fully determined by the leftmost agent with doubleton preference and the input of any other agent is ignored. Hence, the mechanism is overall strategyproof. We will now show that {\sc Leftmost} still achieves an approximation ratio of at most $3$ when it is applied, which will allow us to show an overall bound of $3$. 

\begin{theorem} \label{thm:max-at-least-one-both-upper-3}
For instances with at least one agent with doubleton preference, the approximation ratio of {\sc Leftmost} is at most $3$.
\end{theorem}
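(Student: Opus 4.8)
The plan is to fix an optimal solution $\opt=(o_1,o_2)$ (with $F_1$ at $o_1$ and $F_2$ at $o_2$, $o_1\neq o_2$) and the output $\bw=(w_1,w_2)$ of $\maxL$, so that $w_1=t(\ell_{12})$ and $w_2\in\{t(r_{12}),s(r_{12})\}$, and then to bound separately, for each of the three groups $N_1\cap N_2$, $N_1\setminus N_2$, and $N_2\setminus N_1$, the largest individual cost that an agent of that group can incur under $\bw$; $\MC(\bw)$ is then the maximum of the three bounds. Everywhere I would use that $\cost_a(\opt)\le\MC(\opt)$ for every agent $a$, and that $d(\ell_{12},w_1)\le d(\ell_{12},o_k)$ for $k\in[2]$ because $w_1=t(\ell_{12})$ is the candidate closest to $\ell_{12}$.

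For agents in $N_1\cap N_2$ I would \emph{not} use raw triangle inequalities --- routing $d(i,w_1)+d(i,w_2)$ through $o_1$ and $o_2$ only yields a factor of roughly $6$ --- but instead reduce to Theorem~\ref{thm:max-all-both-upper-2}. Every agent of $N_1\cap N_2$ lies in $[\ell_{12},r_{12}]$, and the map $x\mapsto d(x,w_1)+d(x,w_2)$ is convex (piecewise linear with slopes in $\{-2,0,2\}$), so its maximum over that interval is attained at $\ell_{12}$ or $r_{12}$. An inspection of the proof of Theorem~\ref{thm:max-all-both-upper-2} shows that it in fact establishes $\cost_{\ell_{12}}(\bw)\le 2\,\MC(\opt)$ and $\cost_{r_{12}}(\bw)\le 2\,\MC(\opt)$ using only that $\ell_{12}$ and $r_{12}$ approve both facilities (so that their $\opt$-costs equal the full two-term sums), which remains valid for a general instance. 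Hence no agent of $N_1\cap N_2$ costs more than $2\,\MC(\opt)$ under $\bw$.

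Next I would handle the one-sided groups. For $i\in N_1\setminus N_2$ we have $\cost_i(\bw)=d(i,w_1)$, and chaining through $o_1$ and then $\ell_{12}$, together with $d(\ell_{12},w_1)\le d(\ell_{12},o_1)$, gives $d(i,w_1)\le d(i,o_1)+d(o_1,\ell_{12})+d(\ell_{12},w_1)\le d(i,o_1)+2\,d(\ell_{12},o_1)\le 3\,\MC(\opt)$, since $d(i,o_1)=\cost_i(\opt)\le\MC(\opt)$ and $d(\ell_{12},o_1)\le\cost_{\ell_{12}}(\opt)\le\MC(\opt)$. Symmetrically, for $i\in N_2\setminus N_1$ we have $\cost_i(\bw)=d(i,w_2)\le d(i,o_2)+d(o_2,r_{12})+d(r_{12},w_2)$, whose first two summands are each at most $\MC(\opt)$ (using $\cost_i(\opt)=d(i,o_2)$ and $\cost_{r_{12}}(\opt)\ge d(r_{12},o_2)$). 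The remaining term I would bound by a case split on $w_2$: if $w_2=t(r_{12})$ then $d(r_{12},w_2)\le d(r_{12},o_2)\le\MC(\opt)$; and if $w_2=s(r_{12})$ --- which occurs precisely because $t(r_{12})=w_1$ is already taken --- then, since $t(r_{12})$ is the only candidate that can be strictly closer to $r_{12}$ than $s(r_{12})$ and $o_1\neq o_2$, at least one of $o_1,o_2$ is at distance at least $d(r_{12},s(r_{12}))$ from $r_{12}$, so $d(r_{12},w_2)\le\max\{d(r_{12},o_1),d(r_{12},o_2)\}\le\cost_{r_{12}}(\opt)\le\MC(\opt)$. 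Thus every agent of $N_2\setminus N_1$ also costs at most $3\,\MC(\opt)$, and combining the three groups gives $\MC(\bw)\le 3\,\MC(\opt)$.

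The step I expect to be the real obstacle is the $N_1\cap N_2$ group: naive triangle inequalities are too lossy there, so the key insight is that convexity of the per-agent cost profile collapses the analysis to the two extreme agents $\ell_{12}$ and $r_{12}$, after which the factor-$2$ guarantee of Theorem~\ref{thm:max-all-both-upper-2} carries over verbatim to the general setting. The only other point needing care is the ``second-closest candidate'' branch $w_2=s(r_{12})$, where one must exploit that the two optimal locations are distinct candidates to keep $d(r_{12},s(r_{12}))$ within $\MC(\opt)$, and route the one-sided triangle inequalities through $\ell_{12}$ (resp. $r_{12}$) rather than arbitrary intermediate points so that the chains of three distances stay at a factor of $3$.
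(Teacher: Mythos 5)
Your proof is correct and follows essentially the same route as the paper: a case split on which group the max-cost agent belongs to, triangle-inequality chains routed through $\ell_{12}$ (resp.\ $r_{12}$) for the one-sided agents, the observation that since $o_1\neq o_2$ one of them is at distance at least $d(r_{12},s(r_{12}))$ from $r_{12}$, and a reduction to Theorem~\ref{thm:max-all-both-upper-2} for the agents approving both facilities. The only differences are cosmetic: you make explicit the convexity step that collapses the $N_1\cap N_2$ group to $\ell_{12}$ and $r_{12}$ (which the paper leaves implicit), and for $N_1\setminus N_2$ you bound $d(\ell_{12},w_1)$ by $d(\ell_{12},o_1)$ rather than by $d(\ell_{12},o_2)$, obtaining a factor $3$ where the paper gets $2$ --- both suffice for the claimed bound.
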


\begin{proof}
We consider cases depending on the preference of the agent $i$ that determines the max cost of the mechanism. 
Let $\ell$ be the leftmost agent in $N_1 \cap N_2$, and recall that $w_1 = t(\ell)$ and $w_2 = s(\ell)$. 

\medskip
\noindent
{\bf (Case 1)} The max cost is determined by an agent $i \in N_1 \setminus N_2$. 
Then, by the triangle inequality and since $d(\ell,w_1) \leq d(\ell,o_2)$, we have
\begin{align*}
\MC(\bw) = d(i,w_1) 
&\leq d(i,o_1) + d(\ell,o_1) + d(\ell,w_1) \\
&\leq d(i,o_1) + d(\ell,o_1) + d(\ell,o_2) \\
&\leq 2\cdot \MC(\opt). 
\end{align*}

\medskip
\noindent
{\bf (Case 2)} The max cost is determined by an agent $i \in N_2 \setminus N_1$. 
Since $w_2=s(\ell)$, there exists $x \in \{o_1,o_2\}$ such that $d(\ell,w_2) \leq d(\ell,x) \leq \MC(\opt)$. 
Hence, by the triangle inequality, we have
\begin{align*}
\MC(\bw) = d(i,w_2) 
\leq d(i,o_2) + d(\ell,o_2) + d(\ell,w_2) 
\leq 3\cdot \MC(\opt). 
\end{align*}

\medskip
\noindent
{\bf (Case 3)} The max cost is determined by an agent $i \in N_1 \cap N_2$. Then, following the proof of Theorem~\ref{thm:max-all-both-upper-2} for doubleton instances, we can show an upper bound of $3$. 
\end{proof}

By combining Theorem~\ref{thm:max-at-least-one-both-upper-3} and Theorem~\ref{thm:max-singleton-upper}, we obtain the following result. 

\begin{corollary}
For general instances, there is a strategyproof mechanism with approximation ratio at most $3$. 
\end{corollary}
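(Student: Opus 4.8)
The plan is to assemble the mechanism by case analysis on the (publicly known) approval profile, and then observe that strategyproofness and the approximation bound are inherited from the two component mechanisms established above. Concretely, I would define the following mechanism: given an instance $I=(\bx,\bp,C)$, inspect $\bp$; if $N_1 \cap N_2 \neq \varnothing$ (there is at least one agent approving both facilities), run $\maxL$ on $I$; otherwise all agents have singleton preferences, so run $\maxV$ on $I$.

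The key observation is that the branch taken depends only on $\bp$, which is public and not subject to manipulation, so an agent cannot change which component mechanism is invoked by misreporting her position. Therefore the strategyproofness of the combined mechanism reduces, for each fixed $\bp$, to the strategyproofness of whichever component is used: when $N_1 \cap N_2 \neq \varnothing$ this is the strategyproofness of $\maxL$ (already noted, since its output is determined by the leftmost and rightmost agents of $N_1\cap N_2$, which approve both facilities — this is exactly the argument of Theorem~\ref{thm:maxL-sp}, and applies verbatim here even when some agents have singleton preferences, since those agents cannot become the extreme agents of $N_1\cap N_2$), and when $N_1\cap N_2=\varnothing$ it is the strategyproofness of $\maxV$. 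Hence the combined mechanism is strategyproof.

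For the approximation ratio, I would simply invoke Theorem~\ref{thm:max-at-least-one-both-upper-3} in the first branch (ratio at most $3$ whenever there is an agent approving both facilities) and the preceding theorem on $\maxV$ in the second branch (ratio at most $3$ for instances with only singleton preferences). Since every general instance falls into exactly one of these two classes, the worst-case ratio over all general instances is the maximum of the two, namely at most $3$.

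I expect there to be essentially no real obstacle here; the only point that warrants a sentence of care is the standard pitfall that ``gluing'' two strategyproof mechanisms along a partition of instances need not yield a strategyproof mechanism in general — but it does in our setting precisely because the partition is determined by the public preference profile rather than by any private, reportable data. I would state this explicitly to make the corollary's proof self-contained.
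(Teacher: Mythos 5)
Your proposal is correct and matches the paper's own proof: both dispatch on the public preference profile, running $\maxL$ when $N_1\cap N_2\neq\varnothing$ and $\maxV$ otherwise, and inherit strategyproofness and the ratio of $3$ from the respective component results. Your extra remark---that gluing is safe here precisely because the branch is determined by the known preferences rather than by any reportable data---is a point the paper leaves implicit, but it is the same argument.
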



\section{Allowing same facility locations} \label{sec:same}
In this last section we explore the simpler model in which {\em the two facilities can be placed at the same candidate location}. We show tight bounds on the approximation ratio of deterministic mechanisms for doubleton and general instances (we will not consider singleton instances separately as the approximation ratio turns out to be exactly the same as for general instances). Our results for this model are summarized in Table~\ref{tab:same:results}. 

\begin{table}[t]
    \centering
    \begin{tabular}{c|cc}
                      & Social cost & Max cost \\ 
       \hline
       Doubleton   & $1+\sqrt{2}$ & $2$ \\
       General     & $3$ & $3$ \\
       \hline
    \end{tabular}
    \caption{Overview of the tight bounds for the model where the two facilities are allowed to be placed at the same candidate location.}
    \label{tab:same:results}
\end{table}

All the mechanisms we will consider in this section place the facilities at the closest locations to some fixed agents that approve them. In particular, given the positions reported by the agents, for some $q_j \in [n]$, we place each facility $F_j$ at $t(i_j)$, where $i_j$ is the $q_j$-th ordered agent in $N_j$. It is not hard to verify that all such mechanisms are strategyproof. Indeed, to change the outcome of the mechanism, an agent in $N_j$ would have to report a position that changes the $q_j$-th ordered agent in $N_j$, but this would mean that the facilities that this agent approves might move farther away from the true position of the agent. 

Before we continue we remark that the fact that facilities can be placed at the same location is crucial for our mechanisms to be strategyproof since this eliminates possible misreports by the $q_j$-th ordered agents who determine where the facilities are placed. To be more specific, suppose that we try to adapt this mechanism for the main model that we considered in the previous sections in which the two facilities can only be placed at different locations. Then, in case $t(i_1) = t(i_2)$ we would have to resolve this collision somehow, for example by giving priority to one of these agents, say $i_1$, and placing $F_1$ at $w_1 =t(i_1)$ and then $F_2$ at some other location $w_2$ that is a function of $i_2$ such as $s(i_2)$. However, if $i_1$ approves both facilities, it might be the case that $w_2$ is not close to her position, and thus she prefers to misreport that she is closer to $s(i_1)$ rather than $t(i_1)$, leading to $F_1$ being placed at $s(i_1)$ and then $F_2$ at $t(i_2)=t(i_1)$. Such misreports cannot happen when facilities are allowed to be placed at the same location. 

\subsection{Social cost}\label{sec:same:social}
We start with the case of doubleton instances for which we show a tight bound of $1+\sqrt{2}$. The lower bound follows by observing that the proof of Theorem~\ref{thm:sc-doubleton-lower} holds even when facilities are allowed to be placed at the same location; in particular, in the proof of that theorem we made the simplification that there are two candidate location at $2$ and $-2$, thus having capacity for both facilities.
For the upper bound, first observe that the {\sc Median} mechanism from Section~\ref{sec:sc-doubleton} can also be adapted to the current model (by placing both facilities to the location closest to the median agent), but it is not hard to show that it still cannot achieve an approximation ratio better than $3$. To improve upon the bound of $3$, we consider a family of mechanisms, which, for a parameter $\alpha \in (0,1/2)$, place one facility at the candidate location closest to the position reported by the $\alpha n$-leftmost agent, and the other facility at the candidate location closest to the position reported by the $(1-\alpha)n$-leftmost agent\footnote{Formally, it would be the $\lceil \alpha n \rceil$-leftmost and the $\lceil (1-\alpha)n\rceil$-leftmost agent, respectively, and we require that $\lceil \alpha n \rceil < \lceil (1-\alpha) n \rceil$. This can be guaranteed by creating an identical number of copies for each agent and running the mechanism on the modified instance; the approximation ratio for the modified instance is exactly the same as for the original instance. We omit the ceilings to make the exposition clearer.}. We refer to such mechanisms as $\alphaMech$; see Mechanism~\ref{mech:sc-alpha} for a description. It is not hard to observe that, for any $\alpha \in (0,1/2)$, the mechanism is strategyproof since it falls within the class of mechanisms we described earlier with $q_1 = \alpha n$ and $q_2 = (1-\alpha) n$. 
We now focus on bounding the approximation ratio.

\SetCommentSty{mycommfont}
\begin{algorithm}[h]
\SetNoFillComment
\caption{$\alphaMech$}
\label{mech:sc-alpha}
{\bf Input:} Reported positions of agents with doubleton preferences\;
{\bf Output:} Facility locations $\bw = (w_1,w_2)$\;
$i \gets \alpha n$-leftmost agent\;
$j \gets (1-\alpha)n$-leftmost agent\;
$w_1 \gets t(i)$\;
$w_2 \gets t(j)$\;
\end{algorithm}

\begin{theorem}
For doubleton instances, the approximation ratio of \text{\sc $\left(\sqrt{2}-1\right)$-Statistic} is at most $1+\sqrt{2}$.
\end{theorem}

\begin{proof}
We have $\alpha=\sqrt{2}-1$ and note that $\frac{1+\alpha}{1-\alpha} = \frac{1}{\alpha} = 1+\sqrt{2}$. 
If $o$ is a location that minimizes the total distance from the agent positions, then for any $x$ and $y$ such that $o \leq x \leq y$ or $y \leq x \leq o$, it holds that $\sum_{i\in N} d(i,o) \leq \sum_{i \in N} d(i,x) \leq \sum_{i\in N} d(i,y)$. Hence, since the individual cost of each agent is the sum of distances from both facilities, there exists an optimal solution $\opt=(o_1,o_2)$ such that $o_1=o_2=o$. Without loss of generality, we assume that $w_1 \leq w_2$, and it must be the case that $w_1 \neq o$ or $w_2 \neq o$ since otherwise the approximation ratio would be $1$. 
We consider the following cases:

\medskip
\noindent
{\bf Case 1: $w_1 < o = w_2$ (the case $w_1 = o < w_2$ is symmetric).} \\
By the definition of the mechanism, there is a set $S$ of $\alpha n$ agents that are closer to $w_1$ than to $o$. Hence, we have
\begin{align*}
\SC(\bw) 
&= \sum_{i \in S} d(i,w_1) + \sum_{i \not\in S} d(i,w_1) + \sum_{i \in N} d(i,o) + \\
&\leq  \sum_{i \in S} d(i,o) + \sum_{i \not\in S} \bigg( d(i,o) + d(w_1,o) \bigg) + \sum_{i \in N} d(i,o) \\
&= \SC(\opt) +  (1-\alpha)n \cdot d(w_1,o)
\end{align*}
and
\begin{align*}
\SC(\opt) \geq 2\cdot \alpha n \cdot \frac{d(w_1,o)}{2} = \alpha n \cdot d(w_1,o).
\end{align*}
Therefore, the approximation ratio is at most $1 + \frac{1-\alpha}{\alpha} = \frac{1}{\alpha} = 1+\sqrt{2}$. 

\medskip
\noindent 
{\bf Case 2: $w_1 < o < w_2$.} \\
By the definition of the mechanism, there is a set $S_1$ of $\alpha n$ agent that are closer to $w_1$ than to $o$, i.e., $d(i,w_1) \leq d(i,o)$ for every $i \in S_1$, and thus $d(i,o) \geq d(o,w_1)/2$. Similarly, there is another set $S_2$ of $\alpha n$ agents that are closer to $w_2$ than to $o$, i.e., $d(i,w_2) \leq d(i,o)$ for every $i \in S_2$, and thus $d(i,o) \geq d(o,w_2)/2$.
By combining these facts with the triangle inequality, we have
\begin{align*}
\SC(\bw) 
&= \sum_{i \in S_1} \bigg( d(i,w_1) + d(i,w_2) \bigg) 
+ \sum_{i \in S_2} \bigg( d(i,w_1) + d(i,w_2) \bigg) 
+ \sum_{i \not\in S_1 \cup S_2} \bigg( d(i,w_1) + d(i,w_2) \bigg) \\
&\leq 
\sum_{i \in S_1} \bigg( 2\cdot d(i,o) + d(o,w_2) \bigg) 
+ \sum_{i \in S_2} \bigg( 2\cdot d(i,o) + d(o,w_1) \bigg) \\
&\quad + \sum_{i \not\in S_1 \cup S_2} \bigg( 2\cdot d(i,o) + d(o,w_1) + d(o,w_2) \bigg) \\
&= 
\SC(\opt) + (1-\alpha) n \bigg( d(o,w_1) + d(o,w_2) \bigg).
\end{align*}
We can also bound the optimal social cost as follows:
\begin{align*}
\SC(\opt) \geq 2 \cdot \alpha n \frac{d(o,w_1)}{2}  + 2 \cdot \alpha n \frac{d(o,w_2)}{2}
= \alpha n \bigg( d(o,w_1) + d(o,w_2) \bigg)
\end{align*}
Consequently, the approximation ratio is at most $1 + \frac{1-\alpha}{\alpha} = 1/\alpha = 1+\sqrt{2}$. 

\medskip
\noindent
{\bf Case 3: $o < w = w_1 = w_2$ (the case $w_1 = w_2 = w < o$ is symmetric).} \\
By the definition of the mechanism, there is a set $S$ of $(1-\alpha)n$ agents that are closer to $w$ than to $o$. Hence, by the triangle inequality, we have 
\begin{align*}
\SC(\bw) 
&= 2\cdot \sum_{i \in S} d(i,w) + 2 \cdot \sum_{i \not\in S} d(i,w) \\
&\leq 2\cdot \sum_{i \in S} d(i,o) + 2 \cdot \sum_{i \not\in S} \bigg( d(i,o) + d(o,w) \bigg) \\
&= \SC(\opt) + 2 \alpha n \cdot d(o,w).
\end{align*}
and 
\begin{align*}
\SC(\opt) \geq 2\cdot (1-\alpha)n \cdot \frac{d(o,w)}{2} = (1-\alpha)n \cdot d(o,w)
\end{align*}
Therefore, the approximation ratio is at most $1 + \frac{2\alpha}{1-\alpha} = \frac{1+\alpha}{1-\alpha} = 1+\sqrt{2}$.

\medskip
\noindent
{\bf Case 4: $o < w_1 < w_2$ (the case $w_1 < w_2 <o$ is symmetric).} \\
Clearly, since $o < w_1 <w_2$, $d(o,w_2) = d(o,w_1) + d(w_1,w_2)$. 
By the definition of the mechanism, there is a set $S$ of $(1-\alpha)n$ agents who are closer to $w_1$ than to $o$, i.e., $d(i,w_1) \leq d(i,o)$ for every $i \in S$, and thus $d(i,o) \geq d(o,w_1)/2$. Also, there is a set $T \subset S$ of $\alpha n$ agents who are closer to $w_2$ than to $w_1$, i.e., $d(i,w_2) \leq d(i,w_1) \leq d(i,o)$ for every $i \in T$, and thus $d(i,o) \geq d(o,w_1) + d(w_1,w_2)/2$. 
By combining these two facts with the triangle inequality, we have
\begin{align*}
\SC(\bw) 
&= \sum_{i \not\in S} \bigg( d(i,w_1) + d(i,w_2) \bigg) +  \sum_{i \in S \setminus T} \bigg( d(i,w_1) + d(i,w_2) \bigg)
+  \sum_{i \in T} \bigg( d(i,w_1) + d(i,w_2) \bigg) \\
&\leq \sum_{i \not\in S} \bigg( 2\cdot d(i,o)  + d(o,w_1) + d(o,w_2) \bigg) 
+  \sum_{i \in S \setminus T} \bigg( 2\cdot d(i,o) + d(o, w_2) \bigg)
+  2\cdot \sum_{i \in T} d(i,o) \\
&= \SC(\opt) + \alpha \bigg( d(o,w_1) + d(o,w_2) \bigg) + (1-2\alpha) d(o,w_2) \\
&= \SC(\opt) + \alpha d(o,w_1) + (1-\alpha) d(o,w_2) \\
&= \SC(\opt) + d(o,w_1) + (1-\alpha) d(w_1,w_2). 
\end{align*}
For the optimal social cost, we have
\begin{align*}
\SC(\opt) 
&\geq 2 |S\setminus T| \frac{d(o,w_1)}{2} + 2 |T| \bigg( d(o,w_1) + \frac{d(w_1,w_2)}{2} \bigg) 
= d(o,w_1) + \alpha d(w_1,w_2). 
\end{align*}
Hence, the approximation ratio is at most $1 + \frac{1-\alpha}{\alpha} = 1/\alpha = 1 + \sqrt{2}$. 
\end{proof}

For general instances we show a tight bound of $3$. The lower bound follows by the fact that when all agents have singleton preferences, then the problem reduces to two independent single-facility location problems, and the best possible approximation ratio for each of them is $3$~\citep{feldman2016voting}; alternatively, one can verify that the proof of Theorem~\ref{thm:sc-general-lower} holds even when the facilities can be placed at the same location. For the upper bound, we consider the {\sc Two-Medians} mechanism, which independently places each facility $F_j$ at the location closest to the median agent $m_j \in N_j$.

\begin{theorem}
For general instances, the approximation ratio of {\sc Two-Medians} is at most $3.$
\end{theorem}

\begin{proof}
Using the fact that the median agent $m_j$ minimizes the total distance of all the agents in $N_j$, the fact that $w_j = t(m_j)$, and the triangle inequality, we have
\begin{align*}
\SC(\bw) 
&= \sum_{j \in [2]} \sum_{i \in N_j} d(i,w_j) \\
&\leq \sum_{j \in [2]} \sum_{i \in N_j} \bigg( d(i,m_j) + d(m_j,w_j) \bigg) \\
&\leq \sum_{j \in [2]} \sum_{i \in N_j} \bigg( d(i,m_j) + d(m_j,o_j) \bigg) \\
&\leq \sum_{j \in [2]} \sum_{i \in N_j} \bigg( 2\cdot d(i,m_j) + d(i,o_j) \bigg) \\
&\leq 3 \cdot \sum_{j \in [2]} \sum_{i \in N_j} d(i,o_j),
\end{align*}
and thus the approximation ratio is at most $3$. 
\end{proof}

\subsection{Max cost}\label{sec:same:max}
We now consider the max cost and start by showing a tight bound of $2$ for doubleton instances. The lower bound follows by a sequence of instances similar to those in the proof of Theorem~\ref{max:both-lower-2} but just with two candidate locations. 

\begin{theorem} \label{max:both-lower-2-same-location}
For doubleton instances, the approximation ratio of any deterministic strategyproof mechanism is at least $2-\delta$, for any $\delta > 0$. 
\end{theorem}

\begin{proof}
Consider an arbitrary deterministic mechanisms and the following instance $I_1$: There are two candidate locations at $-1$ and $1$ and two agents (that approve both facilities) positioned at $-\varepsilon$ and $\varepsilon$, respectively, for some infinitesimal $\varepsilon > 0$. 

First, suppose that the mechanism places both facilities at one of the two locations, say $-1$. Then, consider the instance $I_2$ in which the agent at $\varepsilon$ in $I_1$ moves to $1$ in $I_2$, while the other agent remains at $-\varepsilon$. The mechanism must still place both facilities at $-1$ in $I_2$ since otherwise the agent that moved would decrease her cost. However, $\MC(-1,-1) \approx 6$ and $\MC(1,1) \approx 2$, leading to an approximation ratio of at least $3$. 

Second, support that the mechanism places one facility at $-1$ and the other at $1$. Then, consider the instance $I_3$ in which the agent at $\varepsilon$ in $I_1$ moves to $2$ in $I_3$, while the other agent remains at $-\varepsilon$. The mechanism must either still output the solution $(-1,1)$ or the solution $(-1,-1)$, but it cannot output $(1,1)$ as then the agent that moved would decrease her cost. However, $\MC(-1,1) \approx 4$, $\MC(-1,-1) = 6$, and $\MC(1,1)=2$, leading to an approximation ratio of at least $2$.  
\end{proof}

For the upper bound, we consider the mechanism that places $F_1$ at the candidate location closest to the leftmost agent $\ell$ and $F_2$ at the candidate location closest to the rightmost agent $r$. We refer to this mechanism as {\sc Leftmost-Rightmost}; see Mechanism~\ref{mech:max-leftmost-rightmost}. 

\SetCommentSty{mycommfont}
\begin{algorithm}[h]
\SetNoFillComment
\caption{\sc Leftmost-Rightmost}
\label{mech:max-leftmost-rightmost}
{\bf Input:} Reported positions of agents with doubleton preferences\;
{\bf Output:} Facility locations $\bw = (w_1,w_2)$ \;
$\ell \gets$ leftmost agent in $N_1 \cap N_2$\;
$r \gets$ rightmost agent in $N_1 \cap N_2$\;
$w_1 \gets t(\ell)$\;
$w_2 \gets s(r)$\;
\end{algorithm}

\begin{theorem} \label{thm:maxL-sp}
For doubleton instances, the approximation ratio of {\sc Leftmost-Rightmost} is at most $2$.
\end{theorem}

\begin{proof}
Let $i \in \{\ell,r\}$ be the agent that determines the max cost of the mechanism, and $j \in \{\ell, r\} \setminus\{i\}$. 
Let $\opt = (o_1,o_2)$ be an optimal solution. Since $w_1 = t(\ell)$ and $w_2 = t(r)$, by the triangle inequality and the definition of $t(\cdot)$, we have
\begin{align*}
\MC(\bw) &= d(i,t(i)) + d(i,t(j)) \\
&\leq d(i,t(i)) + d(i,o_2) + d(j,o_2) + d(j,t(j)) \\
&\leq d(i,o_1) + d(i,o_2) + d(j,o_2) + d(j,o_1) 
\leq 2\cdot \MC(\opt). 
\end{align*}
Therefore, the approximation ratio is at most $2$ in any case. 
\end{proof}

For general instances, it is not hard to obtain a tight upper bound of $3$. The lower bound follows again by the fact that with singleton preferences the problem is equivalent to two independent single-facility location problems, while the upper bounds follows by the variant of the {\sc Leftmost} mechanism that places $F_j$ at the leftmost agent $\ell_j \in N_j$. 

\begin{theorem}
For general instances, the approximation ratio of {\sc Leftmost} is at most $3$. 
\end{theorem}

\begin{proof}
Let $i$ be the agent that determines the max cost of the mechanism. 
By the triangle inequality and the definition of $t(\cdot)$, we have
\begin{align*}
\MC(\bw) = \sum_{j \in [2]: i \in N_j} d(i,w_j) 
&\leq  \sum_{j \in [2]: i \in N_j} \bigg( d(i,o_j) + d(\ell_j, o_j) + d(\ell_j,w_j) \bigg) \\
&\leq  \sum_{j \in [2]: i \in N_j} \bigg( d(i,o_j) + 2\cdot d(\ell_j, o_j) \bigg) \\
&\leq 3 \cdot \MC(\opt).
\end{align*}
Hence the approximation ratio is at most $3$. 
\end{proof}


\section{Conclusion}
In this paper we studied a truthful two-facility location problem with candidate locations and showed bounds on the best possible approximation ratio of deterministic strategyproof mechanisms in terms of the social cost and the max cost. An obvious question that our work leaves open is to close the gaps between our lower and upper bounds (for doubleton and general instances for the social cost, and doubleton instances for the max cost). Also, it would be interesting to consider the design of randomized strategyproof mechanisms with improved approximation guarantees. There are also multiple ways to extend our model. One such way is to change the assumption about what type of information is public or private, and instead of considering the case where the positions are private and the preferences are known as we did in this paper, focus on the case where the positions are public and the preferences are private (which is a generalization of the models studied by \citet{serafino2016} and \citet{kanellopoulos2021discrete}). Other ways of extending our model include settings with more than just two facilities to place, and more general metric spaces than just the line. 


\bibliographystyle{plainnat}
\bibliography{references}

\end{document}